\documentclass[letterpaper,twocolumn,10pt]{article}
\usepackage{usenix,epsfig,endnotes}

\usepackage{upgreek}
\usepackage[english]{babel}
\usepackage{blindtext}
\usepackage{amsmath}
\usepackage[capitalise,noabbrev]{cleveref}
\usepackage{gnuplottex}
\usepackage{subcaption}
\usepackage{subcaption,graphicx}
\usepackage[normalem]{ulem}
\usepackage{amsthm}
\usepackage{outlines}
\usepackage{totpages}
\usepackage{tablefootnote}
\usepackage{xurl}
\newtheorem{theorem}{Theorem}

\pagestyle{empty}

\newcommand{\sys}[0]{FAST\xspace}

\setlength{\textfloatsep}{8pt plus 1pt minus 2pt}
\setlength{\floatsep}{6pt plus 1pt minus 2pt}
\setlength{\intextsep}{8pt plus 1pt minus 2pt}

\captionsetup{skip=3pt}
\captionsetup[subfigure]{aboveskip=1pt,belowskip=-2pt}

\usepackage{titlesec}
\titlespacing*{\section}
  {0pt}{1.8ex plus .5ex minus .5ex}{0.8ex plus .2ex}
\titlespacing*{\subsection}
  {0pt}{1.5ex plus .4ex minus .4ex}{0.6ex plus .2ex}
\titlespacing*{\subsubsection}
  {0pt}{1.2ex plus .3ex minus .3ex}{0.5ex plus .2ex}
  

\usepackage{enumitem}

\newenvironment{myitemize}
{\begin{itemize}[
    leftmargin=1.75em,
    topsep=0pt,
    partopsep=0pt,
    itemsep=0pt,
    parsep=0pt
]}
{\end{itemize}}

\usepackage{amsthm}
\makeatletter
\renewenvironment{proof}[1][\proofname]{\par
  \pushQED{\qed}%
  \vspace{-8pt} 
  \normalfont \itshape 
  \topsep6\p@\@plus6\p@ \trivlist
  \item[\hskip\labelsep\hskip\parindent 
        \itshape
        \scshape
        #1\@addpunct{.}]\ignorespaces
}{%
  \popQED\endtrivlist\@endpefalse
  \vspace{0pt} 
}
\makeatother
\newtheoremstyle{bolditalichead}
  {4pt} 
  {4pt} 
  {\itshape} 
  {1em} 
  {\bfseries \scshape} 
  {.} 
  {.5em} 
  {} 
\theoremstyle{bolditalichead}
\newtheorem*{theorem*}{Theorem}

\begin{document}
\title{FAST: An Efficient Scheduler for All-to-All GPU Communication}


\newcommand{\afflogo}[1]{\raisebox{0em}{\includegraphics[height=1.6ex]{#1}}}
\newcommand{\afflogoo}[1]{\raisebox{0em}{\includegraphics[height=1.3ex]{#1}}}
\newcommand{\afflogooo}[1]{\raisebox{0em}{\includegraphics[height=1.7ex]{#1}}}

\newcommand{\CMU}{\afflogo{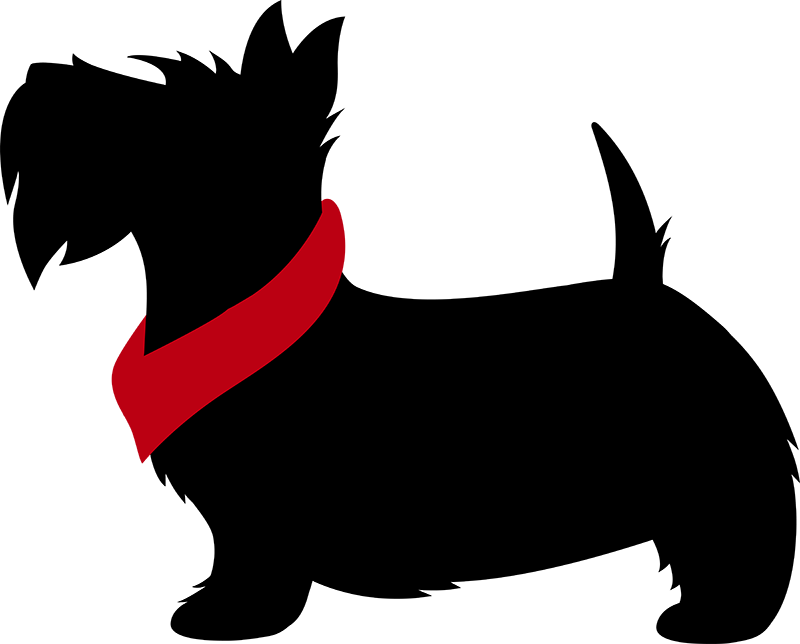}}
\newcommand{\UPENN}{\afflogo{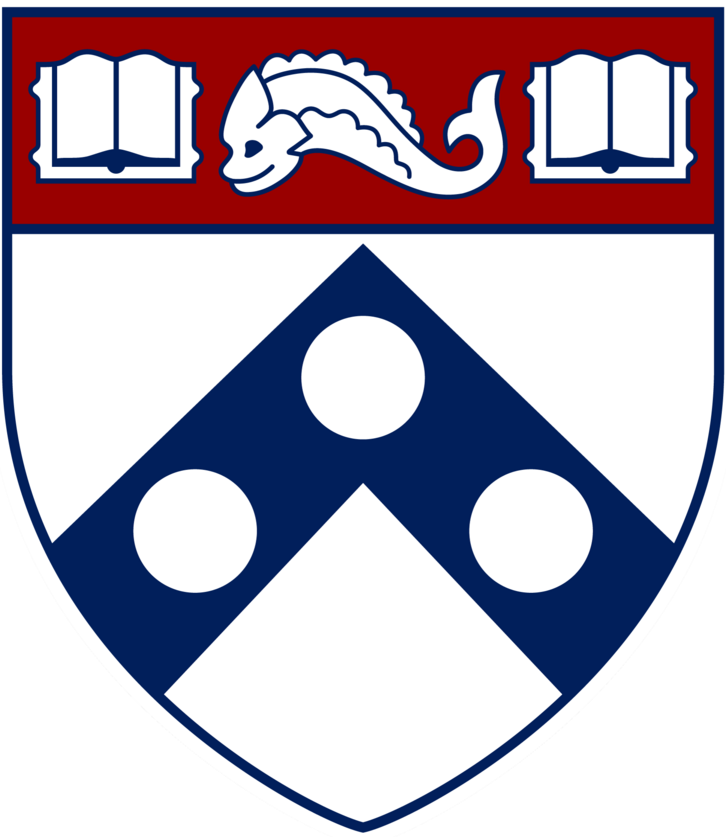}}
\newcommand{\UW}{\afflogoo{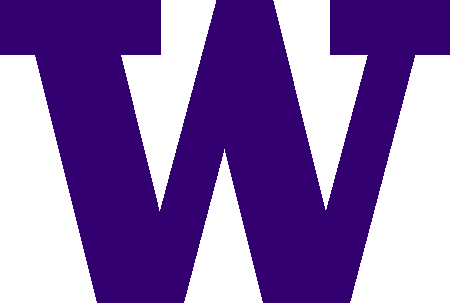}}
\newcommand{\MANGOBOOST}{\afflogooo{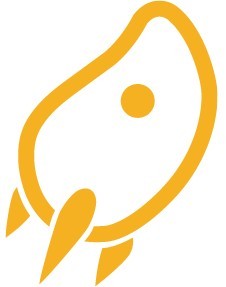}}

\author{
\begin{tabular}{cccc}
Yiran Lei\,\CMU \, \MANGOBOOST &
Dongjoo Lee\,\MANGOBOOST &
Liangyu Zhao\,\UW &
Daniar Kurniawan\,\MANGOBOOST \\
Chanmyeong Kim\,\MANGOBOOST &
Heetaek Jeong\,\MANGOBOOST &
Changsu Kim\,\MANGOBOOST &
Hyeonseong Choi\,\MANGOBOOST \\
Liangcheng Yu\,\UPENN &
Arvind Krishnamurthy\,\UW &
Justine Sherry\,\CMU &
Eriko Nurvitadhi\,\MANGOBOOST \\
\\[-0.25em]
\multicolumn{4}{c}{
{\small
\CMU\,Carnegie Mellon University \quad
\MANGOBOOST\,MangoBoost \quad
\UW\,University of Washington \quad
\UPENN\,University of Pennsylvania
}
}
\end{tabular}
}

\newcommand{\jpara}[1]{%
  \vspace{1.5pt}%
  \noindent\textbf{{#1}}.
}

\setlength{\parskip}{0pt}

\newcommand{\eg}[0]{\textit{e.g.,}\xspace}
\newcommand{\ie}[0]{\textit{i.e.,}\xspace}
\newcommand{\ata}[0]{All-to-All\xspace}
\newcommand{\atav}[0]{\texttt{alltoallv}\xspace}

\maketitle

\begin{abstract}
All-to-All(v) communication is a critical primitive in modern machine learning workloads, particularly mixture-of-experts (MoE) models.
Unfortunately, efficient scheduling is challenging due to workload skew, heterogeneous two-tier fabrics, and incast congestion, compounded by the dynamic nature of MoE workloads, where traffic shifts every few hundred milliseconds.
Existing schedulers are hardly scalable, incurring seconds to hours of synthesis time, making them impractical.

We present \sys, an efficient All-to-All(v) scheduler.
\sys addresses skew through intra-server rebalancing and enforces balanced, one-to-one scale-out transfers that avoid incast.
Evaluated extensively on both NVIDIA H200 and AMD MI300X clusters, \sys consistently outperforms state-of-the-art solutions on skewed workloads while reducing synthesis time by orders of magnitude.
\end{abstract}

\section{Introduction}
\label{sec: introduction}

\ata communication---where every endpoint sends data to all others---has long been a fundamental collective communication primitive in scientific and parallel computing, supporting workloads such as 3D FFTs~\cite{fft}. 
Early systems typically treated each server as the communication endpoint, interconnected by networks with relatively uniform bandwidth. In modern ML clusters, the endpoint has shifted to individual GPUs, which are connected through a two-tier fabric consisting of faster intra-server (scale-up) links and slower inter-server (scale-out) links.\footnote{The terms `scale-up' and `intra-server' network are used interchangeably, and likewise `scale-out' and `inter-server'.} 
Consequently, \ata has become a key operation for many ML applications, including recommendation models~\cite{dlrm1,dlrm2}, Gaussian Splatting~\cite{gaussiansplatting}, and mixture-of-experts (MoE) models~\cite{deepspeed_moe, gshard, deepseek, tutel}.

The role of \ata in ML applications is important.
In mixture-of-experts (MoE) models in particular, its cost can account for a large fraction of training time.
MoE improves the efficiency of model parameters by activating only a subset of experts for each input token rather than all simultaneously.
This selectivity, however, necessitates frequent \ata operations to dispatch tokens to experts across GPUs and aggregate their outputs.
Prior studies~\cite{mixnet, tutel} show that MoE \ata can consume 30–55\% of training time, making it a major contributor to overhead in large-scale training.

The challenges of \ata arise at both the application and system layers. 
At the application layer, traffic is often {\bf skewed} and {\bf dynamic}. 
In MoE, some experts are selected more frequently than others, leading to larger data transfers for their corresponding GPUs during \ata. 
This imbalance keeps certain GPUs and NICs busy after others have finished, creating straggler effects. 
When communication volumes differ across endpoints in \ata, the operation is referred to as \atav~\cite{alltoallv}. 
Further, the traffic pattern in \atav changes every few hundred milliseconds, as the MoE gating function reassigns tokens to experts at runtime (\cref{fig:moe}). 
As a result, a GPU that is a hotspot at one moment may be idle the next. 
This dynamism makes static schedules impractical and requires schedulers to adapt in real time.

At the system layer, the hardware fabric that connects GPUs is inherently two-tiered: fast intra-server links (scale-up) and much slower inter-server links (scale-out) (\Cref{fig:structure_intra_inter}).
This \textbf{heterogeneity} means that flows of the same size can finish quickly inside a server but take an order of magnitude longer across servers, leaving schedulers to coordinate thousands of flows over mismatched bandwidths.
In addition, \atav's dense communication pattern naturally triggers \textbf{incast}, a classic networking problem where many senders overload downlink of the same receiver.
Incast causes network congestion with switch queue buildup and reduced goodput---even under modern congestion control---and  remains an open challenge in large-scale cluster networking~\cite{ultraethernet-whitepaper, ultraethernet}.

Together, these challenges make efficient \atav scheduling especially difficult under the tight timescales demanded by MoE. 
Existing schedulers such as TACCL~\cite{taccl} and TE-CCL~\cite{teccl} employ solvers~\cite{gurobi} to generate near-optimal schedules.
While they overcome the inefficiencies of fixed schedules in collective communication libraries such as NCCL and RCCL, their resulting formulations are NP-hard~\cite{taccl} and can take minutes to hours to synthesize a schedule even for just 32 GPUs.
The state-of-the-art scheduler SyCCL~\cite{syccl} accelerates this process with heuristics and parallelism, yet requires seconds to minutes for balanced \ata, while skewed \atav still remains unresolved.
Consequently, these schedulers are far too slow for MoE \atav workloads that shift every few hundred milliseconds.

In this paper, rather than increasing scheduler complexity, we simplify the problem itself:
{\it It suffices to focus on optimizing the scale-out tier---the real bottleneck}.
We observe that scale-up is roughly an order-of-magnitude faster than scale-out (\cref{fig:intra_inter}).
So the much faster scale-up fabric can cheaply absorb skew within each server, reshaping traffic before it reaches scale-out.
With this, we can then maximize scale-out efficiency by keeping bottleneck servers transmitting at full rate while avoiding incast. 
Achieving this requires pairing senders and receivers without contention, which reduces naturally to a one-to-one matching problem between endpoints, solvable in polynomial time.

Building on this insight, we propose \sys, a polynomial-time, matching-based scheduler for skewed and dynamic \atav workloads.
\sys operates in two phases: (i) \emph{Skew mitigation}, where it uses fast scale-up fabric to rebalance the scale-out workload so that all NICs face equal volume before traversing the slow scale-out links; and (ii) \emph{Balanced, one-to-one transfers}, which use Birkhoff’s decomposition~\cite{birkhoff} to generate successive matchings between senders and receivers, ensuring scale-out transfers proceed without incast while keeping bottleneck servers fully active at line rate until completion.
While Birkhoff's decomposition has appeared in the design of switches~\cite{birkhoff_switch1, birkhoff_switch2, chronos}, to our knowledge, this is the first work to apply Birkhoff’s decomposition to scheduling collective communication at GPU endpoints.

\begin{figure}[t]
\centering
\includegraphics[width=0.66\linewidth]{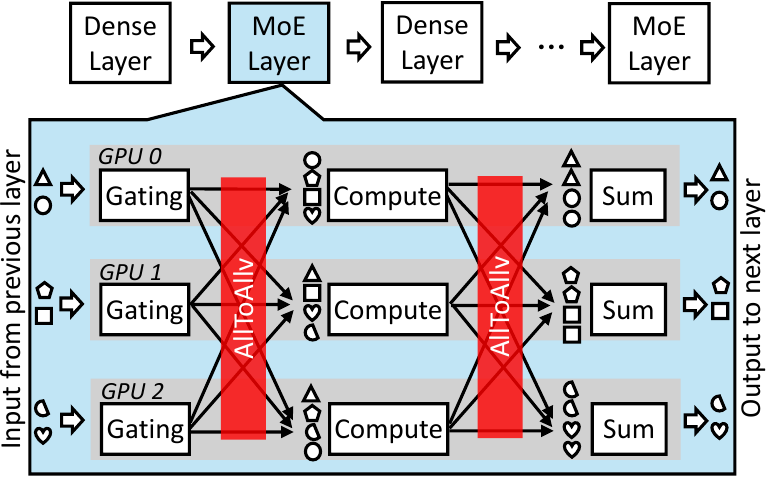}
\caption{MoE models invoke \atav twice per MoE layer, making it a critical communication primitive.}
\label{fig:moe}
\end{figure}

We implement \sys on both NVIDIA H200~\cite{h200} and AMD MI300X~\cite{mi300x} testbeds and evaluate it against state-of-the-art solutions such as DeepEP~\cite{deepep}, TACCL~\cite{taccl}, and TE-CCL~\cite{teccl}. 
Under skewed workloads, \sys outperforms the strongest NVIDIA baseline by 1.01–1.3$\times$ and the strongest AMD baselines by 1.5–2.8$\times$, and when integrated into Megatron-LM~\cite{megatron-lm} on AMD, improves end-to-end MoE training throughput by $4.48\times$ over RCCL~\cite{rccl} that suffers heavily from incast. 
The scheduler is highly efficient, completing in 221 $\upmu$s for 64 GPUs---fast enough for MoE workloads where the traffic matrix changes every few hundred milliseconds. 
Our code is available at~\cite{fast_github}.

\section{Motivation}
\label{sec:motivation}

\ata (\atav) communication---where every endpoint exchanges distinct data with all others—has become a key primitive in modern GPU clusters. In this setting, the communication endpoint is each individual GPU, connected by a two-tier fabric consisting of fast intra-server (scale-up) links and slower inter-server (scale-out) links.

\jpara{\ata communication cost in MoE models}
Mixture-of-experts (MoE) is now a leading architecture for scalable large language models: instead of activating the full model for every input token, a lightweight gating network selects only a subset of `experts', each implemented as a feed-forward network (FFN).
While expert parallelism (EP) improves parameter efficiency, it introduces frequent, large-scale \atav operations to dispatch tokens to selected experts---often spanning hundreds of GPUs~\cite{perplexityai, deepseek}---and to gather results.
As illustrated in \Cref{fig:moe}, each MoE layer invokes \atav twice, and MoE layers often constitute a large fraction of the model.
Prior measurements~\cite{tutel, mixnet} have shown that \atav can account for 30–56\% of training time.

While MoE is our primary focus, the importance of \atav extends beyond. It underpins recommendation systems~\cite{dlrm1,dlrm2}, Gaussian Splatting~\cite{gaussiansplatting}, and classical scientific workloads such as 3D FFT~\cite{fft} (where it can dominate up to 97.3\% of runtime~\cite{fftoverhead}!). Despite being just one collective, \atav performance disproportionately shapes the efficiency of both modern AI and traditional HPC workloads.
\begin{figure}
\centering
\begin{subfigure}[b]{0.22\textwidth}
\includegraphics[width=\linewidth, trim=7 1 7 5, clip]{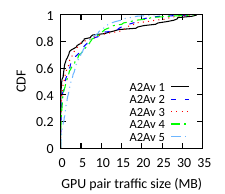}
\caption{Skewness}
\label{fig:workload_cdf}
\end{subfigure}\hfill
\begin{subfigure}[b]{0.22\textwidth}
\includegraphics[width=\linewidth, trim=7 1 5 3, clip]{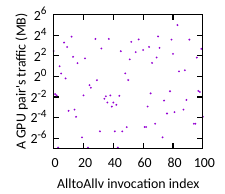}
\caption{Dynamism}
\label{fig:workload_sequence}
\end{subfigure}
\caption{\ata workload is skewed and dynamic when using Megatron-LM to pre-train a MoE model.}
\label{fig: megatron_workload}
\end{figure}

\jpara{Application challenges---skewness \& dynamism}
To study this communication, we profile MoE training using Megatron-LM~\cite{megatron-lm} with 32 experts (one per GPU).
We find that \atav workloads are inherently {\it skewed} and {\it dynamic}, consistent with recent profiling results of Mixtral models~\cite{mixnet, mixtral}.
Unlike balanced collectives such as All-Reduce, MoE \atav generates a highly uneven demand matrix: some GPU pairs exchange more than 12$\times$ the median volume (\Cref{fig:workload_cdf}).
This skew creates \textit{stragglers}---NICs that remain busy long after others have finished, delaying the entire collective and stalling training progress (\Cref{fig:straggler}).
In this setting, scheduling is critical: by routing part of a heavy GPU’s traffic to idle NICs, a smart scheduler can smooth out skew and mitigate stragglers, as illustrated in~\cref{fig:straggler_mitigation}.

\begin{figure}[t]
\centering
\begin{subfigure}[b]{\linewidth}
\centering
\includegraphics[width=0.65\linewidth]{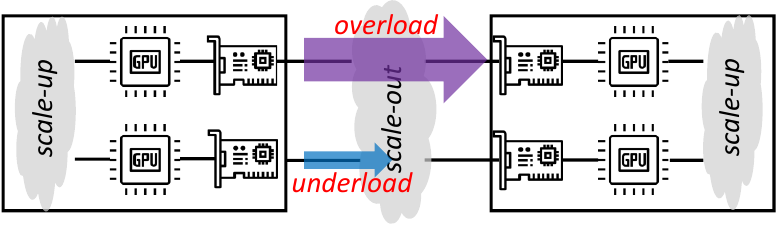}
\caption{Stragglers from skewed workload}
\label{fig:straggler}
\end{subfigure}
\begin{subfigure}[b]{\linewidth}
\centering
\includegraphics[width=0.65\linewidth]{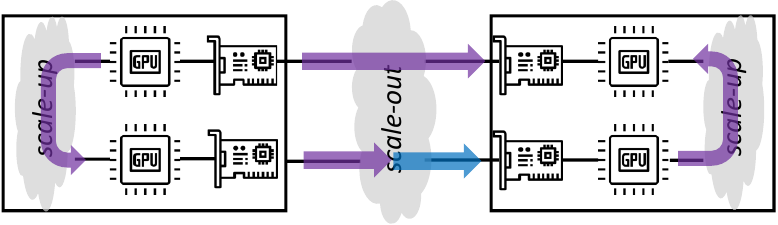}
\caption{Mitigation via rebalanced traffic}
\label{fig:straggler_mitigation}
\end{subfigure}
\caption{Workload skew creates stragglers and underutilized NICs that degrade performance, which can be mitigated through traffic rebalancing.}
\label{fig:straggler_and_mitigation}
\end{figure}

MoE workloads are also \emph{dynamic}: \atav traffic pattern changes every few hundred milliseconds. 
As shown in \Cref{fig:workload_sequence}, the traffic volume of a given GPU pair can vary significantly across successive \atav invocations, since token routing is jointly decided by the input tokens and per-MoE-layer gating functions (\Cref{fig:moe}) and cannot be predicted in advance. 
Therefore, communication schedules must be recomputed online at the timescale as workload changes---a GPU pair with heavy load in one \atav may be nearly idle in the next---making fast, online scheduling essential.

\jpara{System challenges---heterogeneity \& incast}
Modern GPU clusters further complicate scheduling with two system-level challenges.
First, {\it a heterogeneous, two-tier fabric} connects GPUs (\Cref{fig:structure}): fast intra-server links (e.g., 5th-gen NVLink 900 GBps~\cite{h100}) and much slower inter-server links (e.g., Ethernet 800 Gbps~\cite{cx8}).
This two-tier network increases scheduling complexity: for each \atav, schedulers must navigate thousands of flows across mismatched bandwidths, exploring a large space of routing and pacing decisions, which can turn scheduling itself into a bottleneck.

Second, {\it incast} is a classic networking problem that arises from \atav's dense communication pattern, where many flows converge on the same NIC downlink in the scale-out fabric. 
While the burstiness of small messages can be absorbed by switch queues, MoE \atav transfers are much larger---typically 100 MB to 1 GB~\cite{deepep}---causing sustained congestion that requires active control. 
Even with advanced schemes~\cite{dcqcn, timely, ultraethernet}, incast often results in unfair bandwidth sharing and degraded goodput. 
It remains an open challenge~\cite{ultraethernet-whitepaper}, and schedulers often attempt to mitigate it proactively rather than relying on the transport layer alone.

\begin{figure}[t]
\centering
\begin{subfigure}[b]{0.2\textwidth}
\centering
\includegraphics[width=\linewidth]{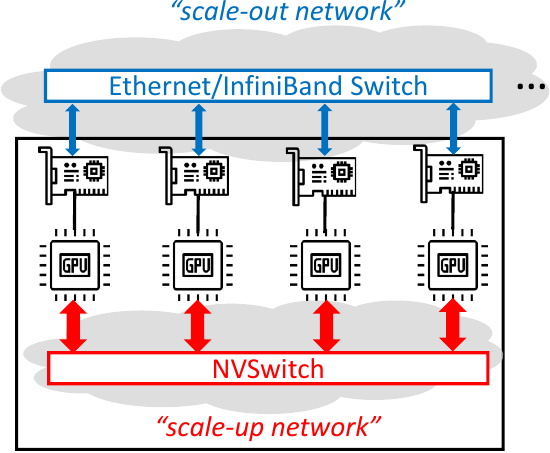}
\caption{Two-tier structure}
\label{fig:structure}
\end{subfigure}
\begin{subfigure}[b]{0.25\textwidth}
\centering
\includegraphics[width=0.875\linewidth, trim=7 1 7 4, clip]{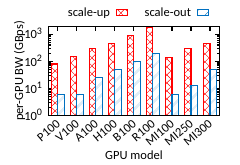}
\caption{Per-GPU full-duplex bandwidth}
\label{fig:intra_inter}
\end{subfigure}
\caption{Modern GPU clusters feature a two-tier fabric: a high-bandwidth intra-server scale-up network (e.g., NVLink, Infinity Fabric) and a lower-bandwidth inter-server scale-out network (e.g., Ethernet, InfiniBand).}
\label{fig:structure_intra_inter}
\end{figure}

\jpara{Limitations of existing approaches}
State-of-the-art schedulers such as TACCL~\cite{taccl}, TE-CCL~\cite{teccl}, and SyCCL~\cite{syccl} are designed to be {\it general}: they support a wide range of collectives (All-Reduce, All-Gather, All-to-All, etc.) and reason about arbitrary topologies.
To achieve this generality, they cast scheduling as constraint-satisfaction or optimization problems, often NP-hard~\cite{taccl}, and solve them with heavy computation.
This is practical for collectives with {\it repetitive} communication patterns like All-Reduce, where the high scheduling cost can be amortized over many iterations.

For \atav, however, existing approaches are too slow. 
SyCCL~\cite{syccl}, the fastest to date with parallelism and heuristic acceleration, is orders of magnitude faster than earlier solver-based systems but still requires minutes to produce a schedule for 64 GPUs---impractical when MoE traffic shifts every few hundred milliseconds. 
While such systems excel at achieving near-optimal completion under arbitrary topologies, their fine-grained modeling makes them hardly scalable to this setting.

At the other extreme, production libraries like NCCL~\cite{nccl} generate schedules instantly, but rely on fixed schedules oblivious to the dynamic, skewed workload, often resulting in lower throughput than what the hardware could achieve.

\jpara{Goal}
Can we design a fast, online scheduler for \atav that sustains high performance? Instead of targeting arbitrary collectives or topologies, we focus on a specialized solution for \atav on today’s two-tier GPU clusters---where skewness, dynamism, asymmetry, and incast constrain existing schedulers. FAST integrates with existing libraries: the runtime dispatches \atav to FAST and uses conventional algorithms for other collectives.

\section{Design Overview}
\label{sec: design_overview}

To generate a separate schedule for each \atav invocation in modern GPU clusters, we take a step back.
Rather than enumerating complex constraints, we start by first solving \atav on a simplified single-tier network and then generalize the solution to today's asymmetric two-tier fabrics.

\begin{figure}[t]
\centering
\includegraphics[width=\linewidth]{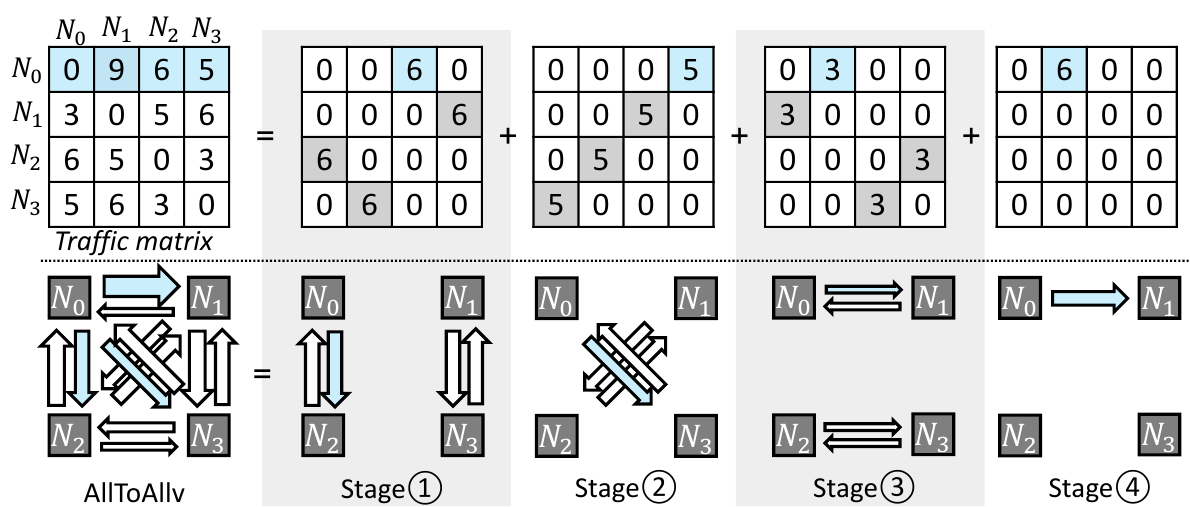}
\caption{Birkhoff's decomposition of a 4-node \atav.  
Completion time is dictated by the largest sender ($N_0$ in blue), and the schedule is optimal since $N_0$ stays active in every stage while lighter nodes drop out early.}
\label{fig:bikrhoff_example}
\end{figure}

\jpara{Starting point: \atav in a single-tier network}  
On a single-tier, full-bisection network with uniform link bandwidth, the objective is to maximize communication efficiency by avoiding incast and congestion. 
This is achieved by ensuring that, at any instant, each sender communicates with exactly one receiver and each receiver accepts data from exactly one sender. 
As a result, communication can be organized into stages, where each stage realizes a one-to-one matching between senders and receivers, and successive stages collectively exchange data across all sender–receiver pairs.

For such a schedule to reach the theoretical minimum completion time, two conditions must hold:  
(i) each stage is balanced so all active nodes start and finish together, and  
(ii) the bottleneck endpoints (the heaviest senders or receivers) remain fully active at line rate until completion.  

To solve this problem, we observe that Birkhoff’s decomposition~\cite{birkhoff}---introduced in 1946 as a mathematical theorem---can be reinterpreted as an optimal scheduling strategy for \atav.
Formally, the theorem states that any traffic matrix\footnote{The theorem applies to scaled doubly stochastic matrices; arbitrary matrices can be adapted, as described later.} can be expressed as a weighted sum of \emph{permutation matrices}.
Viewed through the scheduling lens, each permutation corresponds to a transfer stage: every active row (sender) and column (receiver) has exactly one nonzero entry of equal size (transfer size), so each participant exchanges data with exactly one partner and all finish the stage together.
By summing over these permutation matrices, all flows advance in proportion to their demand, driving the transfer to completion.

\Cref{fig:bikrhoff_example} illustrates our strawman approach: the top pane shows the decomposition into (partial) permutation matrices, while the bottom pane shows the corresponding transfers.  
This approach is appealing because 
(i) completion time hits the lower bound---the bottleneck node (e.g., $N_0$ as sender in blue) transmits in \emph{all} stages;
(ii) each stage is balanced until nodes finish, so participants advance to the next stage together; 
and (iii) the decomposition is computationally efficient.

\jpara{Challenges of applying Birkhoff's decomposition to two-tier networks}
Unfortunately, modern GPU clusters deviate from the simplified network setting in two ways.
(i) In a two-tier fabric, even a `balanced' permutation stage can finish unevenly---transfers on the inter-server links lag behind, idling the faster intra-server links and wasting bandwidth.  
(ii) In 8-GPU-per-server (e.g., HGX~\cite{hgx}) clusters, completion time is dictated by the busiest inter-server endpoints. Because most GPU pairs span server, the faster intra-server tier is rarely limiting. Under heavy skew, Birkhoff-based schedules cannot relieve this inter-server bottleneck and therefore still stall.

\jpara{Our approach}
We exploit the two-tier fabric as an opportunity to simplify scheduling. 
Since intra-server (scale-up) bandwidth is far higher than inter-server (scale-out), it is rarely a bottleneck. 
We instead repurpose it to \emph{reshape traffic in advance}, absorbing imbalance locally so the scale-out tier sees a more uniform workload and Birkhoff’s decomposition can produce more efficient schedules.

At the heart of this reshaping is another simple yet powerful observation: \emph{what matters is delivering data to the right servers}.
Which GPU inside a server handles the transfer is secondary, since intra-server shuffles are cheap compared to scale-out communication.
This leads to a two-phase design:

\begin{figure}
\centering
\includegraphics[width=0.9\linewidth]{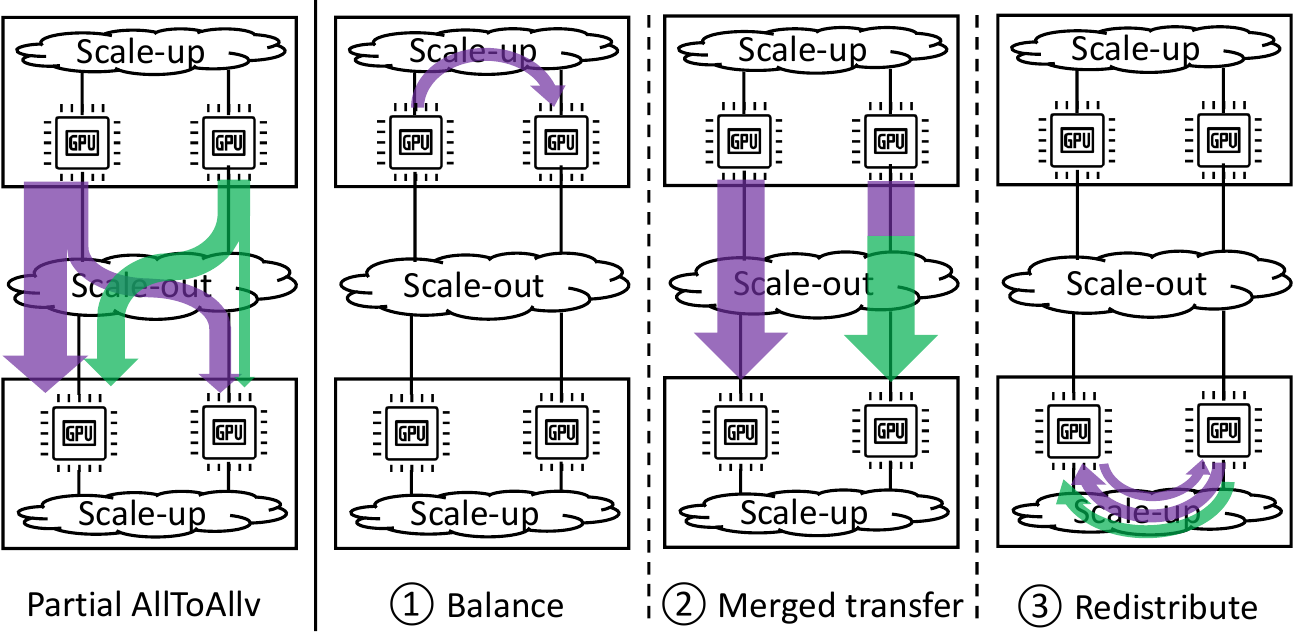}
\caption{\sys: transforming an \atav workload with sender/receiver stragglers into a balanced scale-out transfer. Each GPU is connected to a dedicated NIC (omitted).}
\label{fig:phase1_design}
\end{figure}

\emph{Intra-server scheduling: balancing and redistribution (\S\ref{subsec: phase1_design}).}
Within each server, we equalize traffic across GPUs before it leaves the node (\cref{fig:phase1_design}): overloaded GPUs hand off excess traffic to lighter ones so every NIC carries the same volume per destination server.
On the receiving side, each GPU receives data from exactly one designated sender on each source server, equalizing incoming volume.
This process makes some data initially arrive at a `proxy' GPU at the correct destination server, which is then quickly forwarded to the true destination GPU via a cheap intra-server redistribution.

\emph{Inter-server scheduling: balanced one-to-one transfers (\S\ref{subsec: phase2_design}).}
Once intra-server skew is absorbed, the remaining challenges are server-level imbalance and incast.
Here, we apply Birkhoff’s decomposition to construct successive one-to-one, balanced transfer stages, ensuring bottleneck servers remain active at line rate until their traffic is complete.

Finally, we pipeline the two phases to tighten end-to-end transfer (\S\ref{subsec: end_to_to_pipeline}), and conclude by analyzing key properties (e.g., optimality and complexity) (\S\ref{subsec: properties}).

\section{Two Phase Scheduler}
\label{sec: design}

This section presents our design: mitigating skew within servers (\S\ref{subsec: phase1_design}), handling incast and imbalance across servers (\S\ref{subsec: phase2_design}), and pipelining both to improve end-to-end transfers (\S\ref{subsec: end_to_to_pipeline}). We conclude with an analysis of key properties (\S\ref{subsec: properties}).

\subsection{Intra-server Scheduling: Balancing and Redistribution}
\label{subsec: phase1_design}

The first challenge arises within each server: GPUs often generate or absorb uneven volumes of traffic, creating stragglers where some NICs sit idle while others are overloaded.
By leveraging the fast scale-up fabric, our goal is to eliminate these sender- and receiver-side imbalances, presenting the scale-out tier with a more uniform workload.

\jpara{Example setup}  
Consider a simple 2-server case ($A$, $B$), each with 2 GPUs ($A_0$, $A_1$, $B_0$, $B_1$).  
The workload is represented by a 4$\times$4 GPU-to-GPU traffic matrix (\Cref{fig:matrix_transformation_process}).  
Each \emph{row sum} reflects a GPU’s total outgoing volume; each \emph{column sum} reflects its total incoming volume. 
Cross-server transfers appear as 2$\times$2 tiles (blue for $A\!\rightarrow\!B$, green for $B\!\rightarrow\!A$).   
Since scale-out is the bottleneck, we focus on these tiles, omitting the grey intra-server diagonals for the purpose of illustration.

\begin{figure}
\centering
\includegraphics[width=\linewidth]{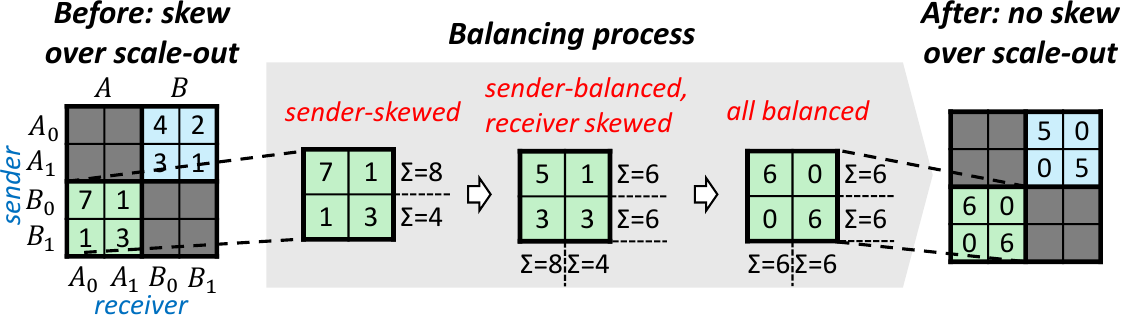}
\caption{Balancing process for a 2-server, 2-GPU-per-server \atav: a skewed tile (left) is reshaped into a scalar form (right), ensuring no GPU NIC is overloaded. Grey tiles (intra-server) omitted for clarity.
}
\label{fig:matrix_transformation_process}
\end{figure}

\jpara{Mitigating sender skew}  
The first step is to prevent a GPU from being the `straggler sender'.
In the $B\!\rightarrow\!A$ tile, GPU $B_0$ must send 8 units, while $B_1$ only needs 4.  
If transmitted directly, $B_1$ would finish early, leaving $B_0$ as a straggler.  
To avoid this, we rebalance within server $B$: heavily loaded GPUs shift part of their traffic to lightly loaded ones using the scale-up fabric.  
Here, $B_0$ transfers 2 units to $B_1$, so both end up with 6.  
In matrix terms, the \emph{row sums} of the tile are equalized---ensuring every NIC in $B$ contributes the same total outgoing load to server $A$.  

\jpara{Mitigating receiver skew}  
After sender-side balancing, GPU $A_0$ may still receive 8 units (column sum) while $A_1$ only needs 4, leaving a receiver-side straggler.
The fix is to decouple the notions of `correct server' and `correct GPU'.
Each sender forwards \emph{all} of its traffic to its peer GPU with the same local index ($B_0 \rightarrow A_0$, $B_1 \rightarrow A_1$), ensuring data first arrives at the correct server.
This \emph{merged peer transfer} keeps receiver loads balanced---since senders were equalized earlier---even though some data arrives temporarily at the wrong GPU, to be corrected later.
In matrix form, each row collapses into a single nonzero in its row-local-index column, turning the $2\times2$ tile into a \emph{scalar matrix}  with equal diagonal entries and all off-diagonals zero (right in~\cref{fig:matrix_transformation_process}).  
The result is one-to-one, balanced scale-out transfers across GPUs.

\jpara{Redistribution}  
At this point, all traffic reaches the correct \emph{server}, but may still be at the {\it wrong GPU}.  
A final redistribution step corrects placement inside the server, routing traffic from the proxy GPU to the true destination over the scale-up fabric.  
Because scale-up is an order of magnitude faster than scale-out, this added step incurs small overhead.

By combining sender balancing, merged peer transfers, and local redistribution, our scheduler reshapes {\it each cross-server tile} into its most balanced \emph{scalar form}.
This transformation removes intra-server skew and equalizes scale-out traffic at GPU level.
What remains is the higher-level server-to-server skew, which we address next.

\begin{figure}
\centering
\includegraphics[width=0.75\linewidth]{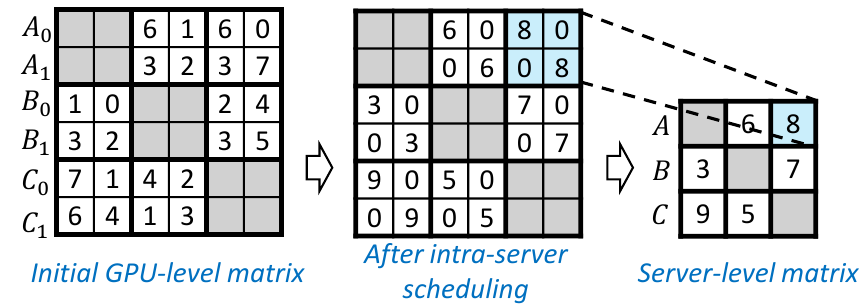}
\caption{Intra-server scheduling reduces a 6×6 GPU-level matrix ($A_0,A_1$,$\cdots$,$C_1$) to a 3×3 server-level matrix ($A,B,C$).}

\label{fig:server_level_matrix}
\end{figure}

\subsection{Inter-server Scheduling: Balanced, One-to-one Transfers}
\label{subsec: phase2_design}

Although intra-server scheduling reshapes the initial skewed GPU-level \atav into a more balanced form, balancing over scale-up network \emph{within each server} cannot eliminate \emph{server-to-server} skew.
Some servers still send or receive more traffic than others, creating bottlenecks that must be addressed to achieve good end-to-end performance.

This imbalance becomes clearer once we reduce the GPU-to-GPU traffic matrix into a simpler server-to-server view.
\Cref{fig:server_level_matrix} shows a 3-server, 2-GPU-per-server example: the original 6×6 GPU matrix (left) is reshaped by intra-server scheduling into the balanced form (middle), where each 2$\times$2 server-to-server tile becomes a scalar matrix.
Each scalar tile can then be collapsed into a single entry, yielding the reduced 3$\times$3 server-level matrix (right).
The intuition is that, after intra-server scheduling, GPUs within a server {\it act identically over scale-out}---each sending and receiving equal volumes---so we can abstract away individual GPUs.
This reduction both exposes the remaining skew across servers and simplifies scheduling, since the server-level problem is typically an order of magnitude smaller than the GPU-level one (e.g., with 8 GPUs per server~\cite{hgx} in modern clusters).

At this server level, two scheduling challenges remain:
(i) incast: even with peer access, GPU$_j$ from multiple servers may funnel into GPU$_j$ of the same destination server, overloading its scale-out link; and
(ii) throughput optimality: the busiest servers must remain fully active at line rate until their traffic flows complete, otherwise the overall completion time lags behind the achievable minimum.

\begin{figure}
\centering
\includegraphics[width=\linewidth]{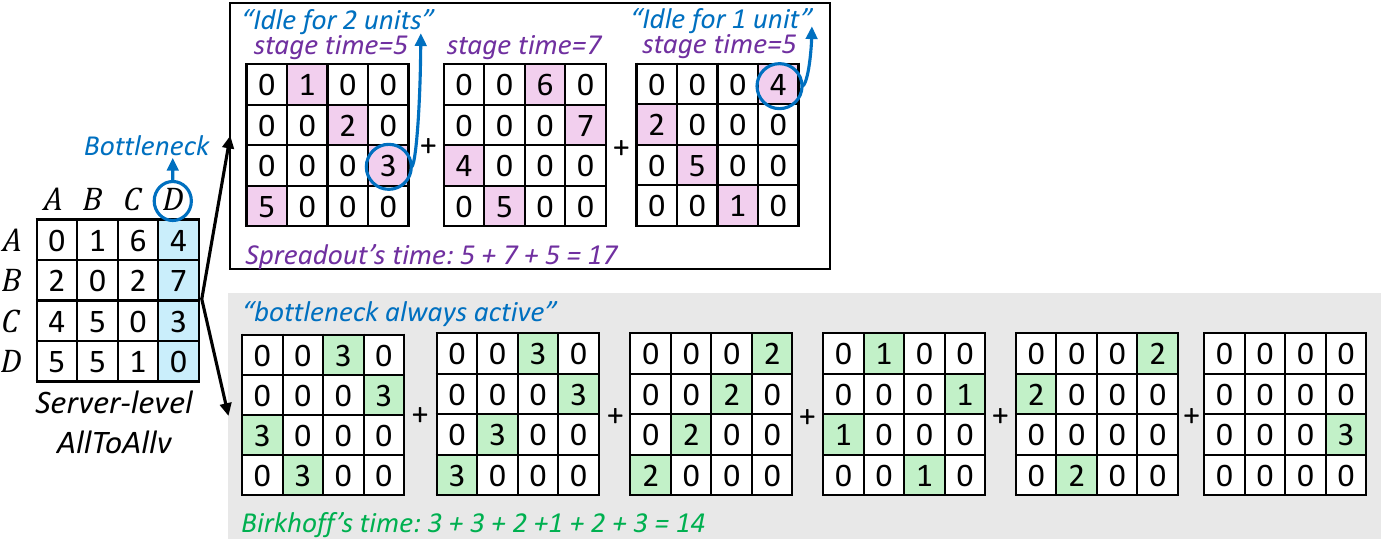}
\caption{SpreadOut vs.\ Birkhoff. Both use one-to-one mappings, but SpreadOut (top) can stall at each stage by leaving the bottleneck server idle, while Birkhoff (bottom) keeps it continuously active until completion (optimal).}
\label{fig:birkhoff_vs_spreadout}
\end{figure}

\jpara{SpreadOut: one-to-one but not optimal}  
A natural incast-free baseline is MPI’s SpreadOut algorithm~\cite{mpi_all2all}, which cycles through `shifted diagonals' of the $N \times N$ server matrix: at stage $i$, server $s$ sends to server $(s+i)\%N$.  
This guarantees one-to-one sender-receiver mappings at every stage.  

However, SpreadOut fails the second requirement: the bottleneck server may sit idle during many stages.  
Although the bottleneck is the row or column with the largest \emph{sum}, the entry selected for that row or column in a given stage need not be the largest entry on the corresponding diagonal.
When this happens, the stage is gated by another matrix entry (i.e., a flow from non-bottlenecks), forcing the true bottleneck to wait.

\Cref{fig:birkhoff_vs_spreadout} shows an example: 
server~$D$, {\it as a receiver}, is the bottleneck---with 14 units column sum---the heaviest among all rows/columns.  
But in stage~1, $D$ receives only 3 units, while the stage is gated by a separate 5-unit flow ($D \rightarrow A$ where $D$ is {\it a sender}).
As a result, $D$ sits idle {\it as a receiver} for 2 extra time units.
The same situation recurs in stage~3, adding another 1 unit of idle time.
Altogether, SpreadOut finishes in 17 units---3 units slower than the 14-unit theoretical minimum.  

In matrix terms, SpreadOut’s completion time equals the \emph{sum of the maximum entry on each diagonal}.  
This sum is provably no smaller than the largest row or column sum---the true lower bound---so SpreadOut can not guarantee optimality.  

\jpara{Birkhoff's decomposition: one-to-one with optimality}  
The optimal completion time is determined by the busiest server---the largest row or column sum in the matrix.
In \Cref{fig:birkhoff_vs_spreadout}, the busiest server~$D$ must receive 14 units, so the minimum possible time is 14.
Hitting this bound requires $D$ to receive at line rate in every stage.

Birkhoff's decomposition~\cite{birkhoff} is tailored to this setting.  
It expresses a traffic matrix as a weighted sum of \emph{permutation matrices}.  
Each permutation matrix has exactly one nonzero per row and column, all of identical value, representing a one-to-one, balanced transfer stage---every active sender transmits the same amount to exactly one receiver, and each receiver accepts from one sender.  
Some permutation matrices may be \emph{partial}, with zero rows or columns for servers that have already finished.
In \Cref{fig:birkhoff_vs_spreadout}, the first four stages are full permutation matrices, while the last two are partial.  

Viewed as a schedule, this decomposition yields a sequence of one-to-one sender-receiver matchings where bottleneck servers remain continuously active until they complete.  
In our example, Birkhoff finishes in 14 units---exactly the lower bound---achieving optimality.  

\begin{figure*}[t]
  \centering
  \includegraphics[width=1.0\linewidth]{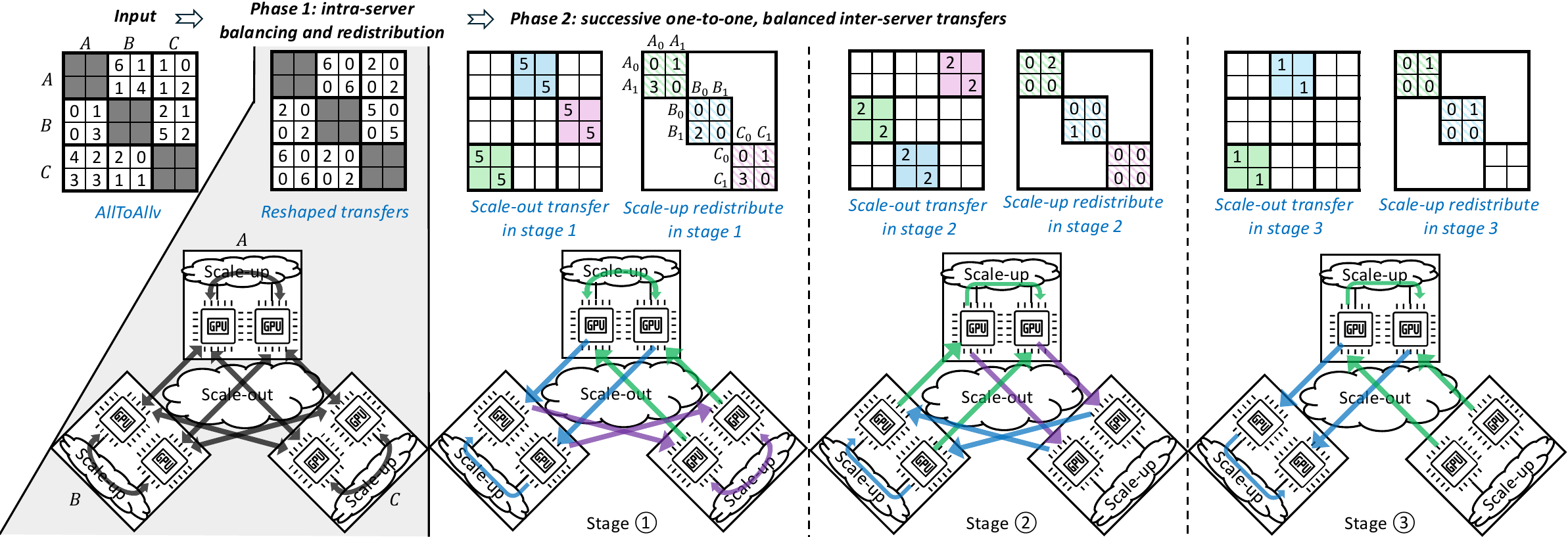}
    \caption{ 
    End-to-end scheduling example for a 3-server, 2-GPU-per-server (6×6) \atav workload. The traffic matrix (top) and transfer process (bottom) show how intra-server scheduling mitigates skew, reshaping each tile into a scalar form (left $\rightarrow$ middle).
    Inter-server scheduling then applies Birkhoff's decomposition to schedule successive 1-to-1 server transfers (middle $\rightarrow$ right). Each stage is balanced, 1-to-1, and keeps bottleneck servers active, achieving near-optimal scale-out performance.}
  \label{fig:phase2_example}
\end{figure*}

\jpara{Multi-server end-to-end scheduling}
The final piece of our scheduler---Birkhoff’s decomposition---addresses the remaining server-level skew.
We illustrate the full scheduling process with a 3-server, 2-GPU-per-server example in \Cref{fig:phase2_example}, omitting intra-server transfers (grey diagonal tiles) for clarity.
The input is a 6$\times$6 GPU-to-GPU traffic matrix (left).
Without our scheduler, the completion-time lower bound is 10 units, set by the heaviest sender GPU ($B_1$, row sum 10) and receiver GPU ($B_0$, column sum 10).

\textbf{Step~1: Balancing.}
This step reduces the severity of the bottleneck.
Within each 2$\times$2 tile (e.g., $A\rightarrow B$), sender loads are equalized across GPUs, and \emph{peer transfer} (e.g., $A_i \rightarrow B_i$) ensures receivers share the load evenly.
With intra-server skew removed, {\it the effective lower bound improves}: in the reshaped matrix (middle of \Cref{fig:phase2_example}), the maximum row/column sum drops from 10 to 8 (i.e., $A,\,C$ as sender and $A,\,B$ as receiver).
Intuitively, the pressure of a straggling NIC/GPU is averaged across all GPUs and NICs within that server, reducing its impact.
At this point, the 6$\times$6 GPU-level matrix can be cleanly collapsed into a skewed 3$\times$3 server-level \atav.

\textbf{Step 2: Balanced, one-to-one transfer stages.}
Birkhoff’s decomposition then partitions this server-level matrix into three balanced, one-to-one transfer stages (right).
Each stage delivers a portion of the workload, and together they complete all transfers.
The resulting schedule satisfies three key properties:
(i) \textit{Incast-free:} 
At the server level, Birkhoff enforces one-to-one matchings. 
Combined with Step~1’s peer-access rule, each GPU communicates {\it only with the same-index GPU in the matched server}, preventing any receiver overload. 
(ii) \textit{Balanced:} Servers send equal volumes per stage, while Step 1 guarantees balanced GPUs within each server.
(iii) \textit{Optimal:} Bottleneck servers (i.e., $A,\,C$ as senders, $A,\, B$ as receiver here) remain fully active across all stages, achieving the theoretical minimum completion time (8 units).

\textbf{Step 3: Per-stage redistribution.}
Step 2's scale-out transfers ensure data reaches the correct server, but not necessarily the correct GPU.
A lightweight redistribution step fixes placement locally, aligned with each stage.
For example, in~\cref{fig:phase2_example}, once $A\rightarrow B$ (shown as blue tile) completes in Stage \textcircled{\scriptsize 1}, the corresponding portion is immediately redistributed within $B$ (shown as the blue-striped tile).

In summary, our scheduler completes the server-level scheduling by decomposing the reshaped workload into a sequence of balanced one-to-one stages.
We now turn to how these stages are executed in practice, showing how pipelining overlaps inter- and intra-server transfers to further reduce latency and hide balancing and redistribution costs.

\subsection{End-to-End Transfer Pipeline}
\label{subsec: end_to_to_pipeline}

\begin{figure}
\centering
\includegraphics[width=\linewidth]{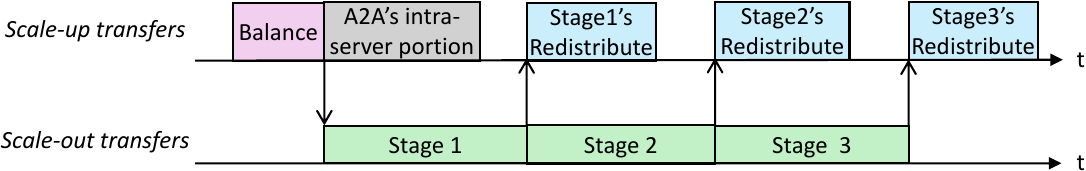}
\caption{End-to-end pipeline: scale-out transfers are kept as active as possible while scale-up operations overlap in the background. Arrows indicate triggering between transfers.}
\label{fig:end_to_end_pipeline}
\end{figure}

Our scheduler combines scale-up transfers (balancing and redistribution) with scale-out transfers (staged by Birkhoff’s decomposition).
While scale-up is much faster than scale-out, {\it serializing all the steps}, e.g.,
{\small\texttt{balance $\rightarrow$ stage~1 scale-out $\rightarrow$ stage~1 redistribute $\rightarrow$ stage~2 scale-out $\rightarrow \dots$}}, 
still results in noticeable overhead.
To minimize end-to-end completion time, we build a pipeline that keeps the scale-out tier---the true bottleneck---as busy as possible, while hiding scale-up transfers in the background. 

There are three types of scale-up transfers to consider:
(i) Balancing, which equalizes GPU loads before scale-out begins;
(ii) Redistribution, which corrects placement as scale-out completes, now aligned \emph{per stage} (e.g., the striped tiles in \Cref{fig:phase2_example}); and
(iii) Intra-server portion of \atav (grey tiles in \Cref{fig:phase2_example}).
The scale-out portion is straightforward: successive one-to-one stages from Birkhoff’s decomposition.

\jpara{Pipeline structure}
\Cref{fig:end_to_end_pipeline} illustrates how the pipeline operates. 
Balancing (purple) runs first, as all subsequent scale-out stages depend on the reshaped workload. 
Once balancing finishes, scale-out begins immediately. 
During scale-out, Stage~$i$’s redistribution (blue) overlaps with Stage~$(i+1)$’s scale-out (green), hiding redistribution cost---except in the final stage, which has no successor. 
The intra-server portion of \atav (grey) is executed alongside the first scale-out stage, making use of idle scale-up bandwidth before redistributions are triggered. 
While the pipeline could be made even tighter by subdividing balancing and scale-out into smaller chunks, the gain is small, so we adopt this simple design.

Overall, the pipeline keeps the scale-out network---the true bottleneck---as busy as possible, while scale-up operations are largely hidden in the background.
This completes the design: intra-server scheduling removes GPU-level skew, inter-server scheduling produces balanced server-level transfers, and pipelining integrates them into a seamless end-to-end execution.
Next, we turn to the key properties of our scheduler, including optimality and computational complexity.

\subsection{Scheduler Properties}
\label{subsec: properties}

Beyond the algorithm itself, the scheduler’s properties reveal why it is a practical, efficient, and high-performance solution for \atav in modern GPU clusters. Assume $N$ servers and each server has $M$ GPUs.

\jpara{Optimality}
The majority of end-to-end transfer---staged scale-out transfers generated by Birkhoff’s decomposition---operates at full efficiency on the scale-out fabric.
Suboptimality arises from additional intra-server operations such as balancing and redistribution. 
Empirically, these costs are small: under typical workloads, they add less than 5\% to the total scale-out cost (as shown in~\cref{subsubsec: perf_skew}), and even under highly skewed traffic (Zipfian with skewness 0.9), the overhead remains below 8\%.

We also prove in Appendix~\ref{subsec: worst-case-proof} that even under adversarial workloads---designed to maximize balancing (e.g., all GPUs within a server join \atav but only one holds data) and redistribution (e.g., only one GPU in each destination server receives data)---the performance gap from the theoretical optimum remains bounded.
For instance, in a 4-node cluster with a 9:1 scale-up to scale-out bandwidth ratio (450 GBps 4th-gen NVLink~\cite{h100} vs.\ 400 Gbps Ethernet), \sys finishes within 2.12$\times$ of the optimum, even in this worst-case setting.

\jpara{Number of stages}
The number of transfer stages produced by Birkhoff’s decomposition depends on the {\it server-level} matrix.
In the best case, a perfectly balanced $N \times N$ server-level matrix requires exactly $N$ stages---the same as SpreadOut.
With skew, more stages may be introduced, but the total is always bounded by $(N^2 - 2N + 2)$~\cite{birkhoff_stages}.

Fewer stages are preferable because each adds synchronization overhead.
Ideally, one would minimize the stage count, but finding such a decomposition is NP-hard~\cite{birkhoff_nphard}.
Rather than attempting this costly search, our scheduler efficiently produces a valid decomposition.
Since the stage count is bounded, the synchronization cost is also bounded---and in practice, our evaluation shows it to be negligible.

\jpara{Computational complexity \& runtime}
Intra-server scheduling is lightweight, involving only simple load balancing and bookkeeping. The primary cost lies in inter-server scheduling---Birkhoff’s decomposition---which runs in polynomial time with $O(N^5)$ complexity and is fast in practice. 

As shown in \Cref{fig:scheduling_overhead_cmp}, our scheduler completes in 25 µs for 4 servers (32 GPUs), 221 µs for 8 servers (64 GPUs), and 805 µs for 12 servers (96 GPUs), assuming $M=8$ GPUs per server---a common configuration~\cite{hgx}. 
With one expert per GPU, this spans the typical expert-parallelism (EP) range of today’s MoE workloads, from EP8---EP128 (e.g., Perplexity AI~\cite{perplexityai}) to large-scale deployments such as EP320 (DeepSeek~\cite{deepseek}). Even at EP320 (40 servers), scheduling overhead remains modest at 77 ms, well within practical bounds.

To put this in context, consider a median-scale case with EP64: each GPU transmits about 1 GB of data to others---the traffic scale reported by prior work~\cite{mixnet, deepep}.
Over a 400 Gbps network, such an \ata takes at least 20 ms, while scheduling adds 221 $\upmu$s ($\approx$1.1\% of total time).
Our scheduling step is a small upfront `tax' that yields a fully optimized plan, shortening end-to-end completion compared to no schedules.

By contrast, prior approaches~\cite{taccl, teccl, syccl} cast scheduling as NP-hard problems (e.g., MILP or multi-commodity flow) and rely on solvers~\cite{gurobi} that run orders of magnitude slower: for a 16-GPU \ata, the fastest solver-based scheduler, SyCCL~\cite{syccl}, takes 3.6 s---while our scheduler runs in  3.1 $\upmu$s.

\jpara{Exclusion of \ata scheduling over scale-up}
Both balancing and redistribution are themselves skewed \atav operations.
Because they run entirely on the fast scale-up fabric, sophisticated scheduling is unnecessary.
For these steps we use MPI’s SpreadOut algorithm~\cite{mpi_all2all}, which provides simple one-to-one sender–receiver mappings at low cost.
While Birkhoff's decomposition could deliver an optimal schedule here as well, its added computation is unnecessary---the scale-up tier is not the bottleneck.

Another caveat is that SpreadOut may not be well suited for older GPUs with non-symmetric scale-up topologies, e.g., the ring in AMD MI250~\cite{mi250x} and the hybrid cube mesh in NVIDIA V100~\cite{v100}.
However, recent GPUs adopt symmetric scale-up fabric, e.g., switch-based topology~\cite{h100} and fully connected mesh~\cite{mi300x}, which are our target platforms.

\jpara{Adapting an arbitrary matrix to a valid form}
Birkhoff’s theorem applies to \emph{scaled doubly stochastic matrices}, where all row and column sums are equal. Since real $N \times N$ server-level traffic matrices are arbitrary, we first embed them into this form by adding an auxiliary matrix, which can be constructed in $O(N^2)$ time. This procedure increases only the lighter rows or columns until all sums match the heaviest one, leaving the true bottleneck row or column unchanged.

The auxiliary entries represent \emph{virtual} transfers that are never executed and are ignored once all real traffic completes. As a result, some permutation matrices produced by the decomposition may appear partial \emph{with respect to real traffic} (\cref{fig:birkhoff_vs_spreadout}), with zero rows or columns originating from the auxiliary matrix. Importantly, this transformation preserves {\it both correctness and optimality, since the maximum row or column sum—the true bottleneck---remains unchanged}.

\jpara{Birkhoff’s decomposition: overview}
The algorithm takes as input an $N \times N$ scaled doubly stochastic matrix and outputs a sequence of permutation matrices whose weighted sum reconstructs the input. The matrix can be viewed as a bipartite graph with $N$ senders (rows) and $N$ receivers (columns), where nonzero entries correspond to edges.

At a high level, the algorithm repeatedly finds \emph{a perfect matching} in this graph; each such matching selects exactly one outgoing edge per sender and one incoming edge per receiver, yielding a permutation matrix.
Each matching can be computed, for example, using the Hungarian algorithm~\cite{Kuhn1955Hungarian} with $O(N^3)$ complexity. 
After subtracting the permutation matrix derived from each matching, the residual matrix remains scaled doubly stochastic, allowing the process to repeat. 
In the worst case, the algorithm requires $O(N^2 - 2N + 2)$ iterations~\cite{birkhoff_stages}, resulting in $O(N^5)$ total complexity.

A key advantage of the algorithm is that it advances \emph{all} bottleneck rows and columns---potentially multiple with equal maximum load---at the same rate. 
In contract, a greedy algorithm may fail to account for all bottlenecks simultaneously, often prioritizing individual large entries and suboptimal.
\section{Evaluation}
\label{sec: evaluation}

We evaluate \sys to answer four key questions:

\begin{myitemize}
\item How does \sys compare to state-of-the-art schedulers on workloads, transfer sizes, and degrees of skew (\S\ref{eval: benchmark})?
\item What end-to-end throughput gains does \sys deliver to MoE training (\S\ref{eval:endtoend})?
\item How does \sys's scheduling runtime compare with solver-based solutions (\S\ref{eval:scheduling_overhead})?
\item How does \sys scale to larger clusters and varying network bandwidths (\S\ref{eva: simulation})?
\end{myitemize}

\jpara{Testbed}
(i) {\it NVIDIA cluster:} 
4 servers with NVIDIA H200 GPUs~\cite{h200}, each with 8 GPUs, interconnected by 400 Gbps InfiniBand with credit-based flow control~\cite{infinibandflowcontrol} and 4 KB MTU. 
Intra-server scale-up uses NVLink, with a 9:1 scale-up to scale-out bandwidth ratio (450 GBps vs.\ 50 GBps). The scheduler runs on Intel Xeon Platinum 8468 CPUs.
(ii) {\it AMD cluster:} 4 servers with AMD MI300X GPUs~\cite{mi300x}, each with 8 GPUs, connected via 100 Gbps RoCEv2 Ethernet with out-of-the-box DCQCN~\cite{dcqcn} as congestion control and 1 KB MTU. 
Intra-server scale-up is a fully connected Infinity Fabric mesh, with a 35:1 bandwidth ratio (448 GBps vs.\ 12.5 GBps). The scheduler runs on AMD EPYC 9534 CPUs.
Each GPU has its dedicated NIC with GPU Direct RDMA~\cite{gpudirect} in both clusters.

\jpara{Libraries \& dependencies}
We provide a Python API, {\small \texttt{all\_to\_all\_FAST}}, mirroring PyTorch’s {\small \texttt{all\_to\_all\_single}} for integration into existing models.
Implementations of data transfers differ by hardware:
(i) On H200, scale-up/scale-out uses CUDA IPC~\cite{cuda}/NVSHMEM~\cite{nvshmem} respectively.
(ii) On MI300X, both use RCCL~\cite{rccl}.
The transfer pipeline is implemented using multiple CUDA streams with synchronizations.

\jpara{Integration into MoE systems}
\sys operates in a \emph{distributed} fashion: given the same traffic matrix, each GPU independently computes the identical global schedule, eliminating the need for a central coordinator. Only the traffic matrix---a compact integer array---must be synchronized; the schedule itself does not need to be exchanged.

This integration is natural in MoE frameworks such as Megatron-LM~\cite{megatron-lm}, which already materialize the per-\atav traffic matrix before each dispatch. Specifically, Megatron-LM performs an All-Gather of per-expert token counts (e.g., {\small \texttt{num\_global\_tokens\_per\_expert}~\cite{megatron-lm-allgather}}), from which the full traffic matrix can be constructed. 
Importantly, this All-Gather is \emph{not} introduced by \sys.
It is already required by the baseline NCCL \atav implementation to compute receive counts and buffer offsets, since each GPU knows how many tokens it sends but not how many it will receive from others.
\sys simply consumes this existing traffic matrix, synthesizes a schedule, and executes the transfers.

\jpara{Workloads}
We evaluate \sys on both synthetic and real MoE workloads.
For synthetic workloads, we model skewness by varying GPU-pair transfer sizes using two distributions: (i) {\it random} \atav with uniformly-distributed sizes, and (ii) {\it skewed} \atav with Zipfian-distributed sizes.
For real workloads, we focus on MoE training, where prior work identifies \atav as the primary bottleneck~\cite{tutel, mixnet, gshard}, and report the end-to-end throughput improvements.

We also evaluate {\it repetitive, balanced} \ata workloads, where existing schedulers can amortize their cost, enabling fair comparison in settings favoring prior approaches.

\jpara{Metrics}
Our primary metric is {\it algorithmic bandwidth}, widely used in prior work~\cite{nccl, taccl}. It captures how fast a transfer completes, defined as  $\frac{\text{Total transfer size}}{\text{\# of GPUs} \times \text{Completion Time}}$. Because skewed \atav may have variable per-GPU volumes, this average normalizes across all GPUs.
Algorithmic bandwidth can exceed the raw scale-out link bandwidth, since part of the transfer completes locally over the faster scale-up fabric. For example, in a 4-node cluster with 50 GBps scale-out links, if 25\% of the traffic is intra-server, the optimal algorithmic bandwidth is $50 / 0.75 = 66.6$ GBps. Higher is better.

Our second metric is {\it scheduling runtime}, which captures the time to synthesize a schedule. Lower is better.

\jpara{Baselines}
We compare \sys against two classes of baselines:
(i) \textit{Solver-based schedulers}: TACCL~\cite{taccl} and TE-CCL~\cite{teccl}, which used constraint-based solvers for scheduling. We evaluate them on both NVIDIA and AMD clusters.
(ii) \textit{Industry libraries}: On NVIDIA: NCCL~\cite{nccl} (version 2.27.3), DeepEP~\cite{deepep} (from DeepSeek~\cite{deepseek}), and MSCCL~\cite{msccl}. On AMD: RCCL~\cite{rccl}, SpreadOut~\cite{mpi} (abbreviated as `SPO'), and MSCCL.
Since DeepEP is NVIDIA-only, we do not evaluate it on AMD. NCCL outperforms SpreadOut on NVIDIA, so we omit SpreadOut there; on AMD, RCCL’s \atav lacks effective scheduling, making SpreadOut a stronger baseline that we include.

\subsection{\atav Performance}
\label{eval: benchmark}

\subsubsection{Performance Under Different Transfer Sizes}

\sys consistently achieves the best \atav performance on both NVIDIA and AMD testbeds.
For baselines, NCCL, DeepEP, SpreadOut, and RCCL support \atav natively, while TACCL, TE-CCL, SyCCL, and MSCCL only support balanced \ata.
Adapting the solvers to skewed \atav can be done in two ways: (i) explicitly encoding variable flow sizes into the problem formulation, or (ii) padding all flows to a uniform size so the solver sees a balanced workload (padding data is used only for scheduling, not for actual transfers).
We attempted the first approach, but it made the solvers---already slow on balanced workloads (e.g., TACCL needs over 30 minutes for 32 GPUs)---incapable of finishing within a reasonable time.
Thus, we adopt padding to simplify workloads for solver-based schedulers.

We evaluate with per-GPU message sizes from 100 MB to 1 GB, representative of typical workloads reported by prior work~\cite{mixnet, deepep}.
The improvement factors vary across testbeds, reflecting differences in both network hardware and software implementation, as discussed below.

\begin{figure}
\begin{subfigure}[b]{0.235\textwidth}
\centering
\includegraphics[width=0.95\linewidth, trim=7 1 4 4, clip]{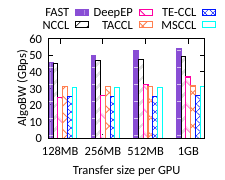}
\caption{Random}
\label{fig:nv_bw_buffer_sz_random}
\end{subfigure}
\begin{subfigure}[b]{0.235\textwidth}
\centering
\includegraphics[width=0.95\linewidth, trim=7 1 4 4, clip]{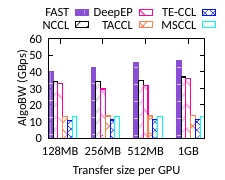}
\caption{Skewed w/ factor 0.8}
\label{fig:nv_bw_buffer_sz_skewed}
\end{subfigure}
\caption{\atav performance on NVIDIA testbed (with 450 GBps scale-up and 50 GBps scale-out).}
\label{fig:nv_bw_buffer_sz}
\end{figure}

\jpara{Result on NVIDIA testbed}
As shown in~\cref{fig:nv_bw_buffer_sz_random}, \sys achieves the best algorithmic bandwidth under random workloads. It slightly outperforms NCCL by 1.01–1.1$\times$, and exceeds DeepEP (1.5–1.9$\times$) and TACCL (1.5–1.7$\times$). Performance improves with larger transfers, as scale-out links saturate more easily and staging overheads in \sys are amortized.

NCCL with PXN~\cite{ncclpxn} employs {\it sender-side aggregation}, consolidating outgoing flows at proxy GPUs before traversing scale-out links. By aggregating traffic across flows, PXN reduces per-GPU variance and mitigates mild skew. As a result, under mildly skewed workloads, NCCL can approach \sys's performance even without explicit traffic rebalancing.

DeepEP~\cite{deepep} places aggregation and fan-out {\it on the receiver side}. 
Data are first delivered to ingress GPUs on the destination server and then forwarded via NVLink to their target GPUs. Under skew, multiple ingress GPUs may concurrently forward large volumes to the same targets, causing NVLink receive contention and local hotspots that limit throughput---as observed by DeepEP’s own NVLink runtime profiler.

Solver-based schedulers (e.g., TACCL, TE-CCL) convert skewed \atav into a fictitiously balanced \ata via padding, which significantly reduces synthesis time and enables practical schedule generation. However, the padded transfers do not correspond to real data movement and still occupy communication slots, delaying actual transfers. So their performances are much lower than \sys in practice.

Straggler effects intensify under skew. Under Zipfian workloads (\cref{fig:nv_bw_buffer_sz_skewed}), \sys outperforms NCCL by 1.2–1.3$\times$, DeepEP by 1.2–1.5$\times$, and TACCL by over 3$\times$. The performance gap with NCCL widens as the workload shifts from uniform to Zipfian: even with PXN aggregation, residual imbalance introduces stragglers that limit NCCL’s efficiency.

\begin{figure}
\begin{subfigure}[b]{0.235\textwidth}
\centering
\includegraphics[width=0.95\linewidth, trim=7 1 4 4, clip]{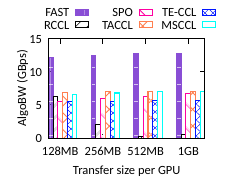}
\caption{Random}
\label{fig:amd_bw_buffer_sz_random}
\end{subfigure}
 \begin{subfigure}[b]{0.235\textwidth}
\centering
\includegraphics[width=0.95\linewidth, trim=7 1 4 4, clip]{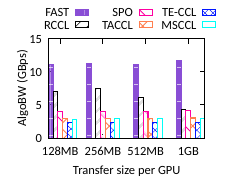}
\caption{Skewed w/ factor 0.8}
\label{fig:amd_bw_buffer_sz_zipf}
\end{subfigure}
\caption{\atav performance on AMD testbed (with 448 GBps scale-up and 12.5 GBps scale-out).}
\label{fig:amd_bw_buffer_sz}
\end{figure}

\jpara{Results on AMD testbed}
\sys again achieves the best performance under random workloads (\cref{fig:amd_bw_buffer_sz_random}), surpassing TACCL by 1.3–1.8$\times$, TE-CCL by 1.6–2.3$\times$, SpreadOut by 1.9–2.1$\times$, and RCCL by 1.1–10$\times$. As with NVIDIA, most algorithms benefit from larger transfers.

RCCL, however, shows the opposite trend: throughput decreases with transfer size. This is mainly due to its \atav implementation---launching all flows concurrently with no scheduling---causing severe incast and reduced goodput.

Under skewed workloads (\cref{fig:amd_bw_buffer_sz_zipf}), \sys extends its lead, outperforming TACCL by 2.9–3.8$\times$, TE-CCL by 3.6–4.7$\times$, SpreadOut by 2.5–2.8$\times$, and RCCL by 1.3–2.6$\times$. Interestingly, RCCL performs relatively better here than under random workloads: skew concentrates traffic into a few elephant flows while leaving most as short mice transfers, reducing widespread collisions and easing incast pressure.

\subsubsection{Performance under Balanced \ata}

On the simple, repetitive balanced workload, DeepEP (60 GBps), TACCL (59 GBps), and NCCL (58 GBps) all achieve good performance.
In this setting, \sys achieves 58 GBps---slightly below the best---since its balancing and redistribution add minor overhead unnecessary when the workload is already balanced.
While prior work efficiently handles balanced \ata, they lack mechanisms to address skew-induced stragglers, where \sys provides clear advantages.

\subsubsection{Performance under Different Skewness}
\label{subsubsec: perf_skew}

We generate skewed workloads using a Zipfian distribution with varying skewness factors. A larger factor produces more mice flows and amplifies elephant flows, creating a stronger imbalance. The \atav traces we profile during MoE pretraining show skewness factors between 0.4 and 0.8.

On the AMD testbed, we compare \sys with TACCL, SpreadOut, and RCCL across different skewness levels (\cref{fig:skewness}). 
\sys consistently delivers the best performance, outperforming RCCL by 1.6–10$\times$, SpreadOut by 2.1–3.1$\times$, and TACCL by 2.1–4.5$\times$ (TE-CCL omitted here as it performs slightly worse than TACCL).

\begin{figure}
\centering
\begin{subfigure}[b]{0.52\linewidth}
\includegraphics[width=0.95\linewidth, trim=7 1 2 4, clip]{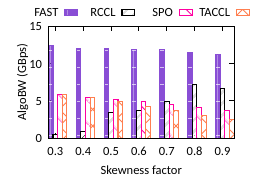}
\caption{Performance}
\label{fig:skewness}
\end{subfigure}
\begin{subfigure}[b]{0.45\linewidth}
\includegraphics[width=0.95\linewidth, trim=7 1 4 4, clip]{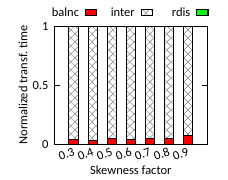}
\caption{Breakdown}
\label{fig:breakdown}
\end{subfigure}
\caption{\atav performance and transfer time breakdown under different skewness on AMD testbed.}
\label{fig:skewness_breakdown}
\end{figure}

The performance gap reflects how each system handles stragglers.
(i) For \sys, increased skew lengthens the balancing phase, but the overhead remains modest as shown in~\cref{fig:breakdown}: even at skew factor 0.9, balancing and redistribution account for under 8\% of scale-out time (under 5\% in most cases). Since scale-out dominates and runs at full efficiency, \sys remains within 1.08$\times$ of optimal.
(ii) TACCL degrades under skew because heavier skew requires more padding, reducing effective efficiency.
(iii) SpreadOut suffers as skew amplifies per-stage imbalance, causing overall transfer time to be dominated by stragglers.
(iv) RCCL shows the opposite trend: higher skew produces more mice flows, whose contention is absorbed by switch buffers, reducing incast severity and improving performance on the AMD testbed.

\subsection{End-to-End Performance}
\label{eval:endtoend}

To evaluate \sys in an end-to-end setting, we integrate it into Megatron-LM~\cite{megatron-lm} on the AMD testbed to perform on-the-fly scheduling for every \atav communication during MoE training.  
We compare against PyTorch’s~\cite{pytorch} default \texttt{\small all\_to\_all\_single} operator, which uses RCCL as the backend.  
Solver-based approaches cannot be integrated due to their prohibitive scheduling overhead.  

We vary two key MoE configurations to study how \sys behaves under different training scenarios:  
(i) \textit{Expert parallelism (EP)}: We sweep EP from 16 to 24 to 32, which directly determines the scale of \atav. Under the configuration where each GPU hosts one expert (like DeepSeek~\cite{deepseek}), this corresponds to scaling the transfer from 16 GPUs (2 servers) to 32 GPUs (4 servers).  
(ii) \textit{Top-K routing}: In MoE, each input token is routed to the Top-K most relevant experts; larger $K$ increases token replication and thus flow size in the \atav workload. For this experiment, we fix EP at 32 and vary $K$.  
Other workload-related parameters, such as batch size and sequence length, similarly affect communication size and training performance.  

\begin{figure}
\begin{subfigure}[b]{0.54\linewidth}
\centering
\includegraphics[width=0.95\linewidth, trim=7 1 4 4, clip]{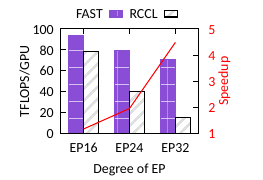}
\caption{Vary EP}
\label{fig:moe_perf_node}
\end{subfigure}
\begin{subfigure}[b]{0.43\linewidth}
\centering
\includegraphics[width=0.95\linewidth, trim=7 1 4 3, clip]{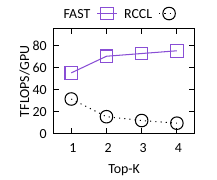}
\caption{Vary Top-K routing}
\label{fig:moe_perf_topk}
\end{subfigure}
\caption{Megatron-LM MoE training performance improvement on AMD testbed.}
\label{fig:moe_training_perf}
\end{figure}

As shown in~\cref{fig:moe_perf_node}, \sys delivers a 1.18–4.48$\times$ speedup in end-to-end training throughput (under Top-2 routing) across different EP levels. Two main trends emerge:
(i) Training throughput (left y-axis) decreases as EP increases. This is expected since higher EP involves more GPUs, more servers, and thus more scale-out traffic, which lowers communication efficiency and increases GPU idle time.
(ii) The baseline degrades sharply as EP grows due to escalating incast. 
RCCL performs no scheduling across transfers, leaving congestion entirely to the transport layer. 
For example, with EP16, a receiver GPU/NIC handles up to 8 concurrent flows, while with EP32 this rises to 24. With out-of-the-box DCQCN as congestion control, this causes severe throughput collapse.

As shown in~\cref{fig:moe_perf_topk}, \sys outperforms the baseline by 1.75–7.88$\times$. Notably, \sys and RCCL exhibit opposite scaling with $K$: increasing $K$ improves \sys by enlarging flows and amortizing staging overhead, but degrades RCCL due to increased flow collisions and congestion.

\subsection{Scheduling Overhead}
\label{eval:scheduling_overhead}

\sys introduces two types of overhead relative to non-scheduling algorithms such as NCCL:  
(i) additional scheduling runtime, and  
(ii) extra memory for intermediate buffers.  

\jpara{Scheduling runtime}  
As shown in~\cref{fig:scheduling_overhead_cmp}, \sys scales to 320 GPUs with only 77 ms of overhead---faster than SyCCL, the fastest solver-based scheduler, which already takes 3.6s at just 16 GPUs.  
Earlier solver-based methods generally fail to scale beyond 64 GPUs, rendering them unusable for moderate expert-parallelism levels such as EP96 and EP128.  

\begin{figure}
\centering
\includegraphics[width=0.75\linewidth, trim=4 1 4 4, clip]{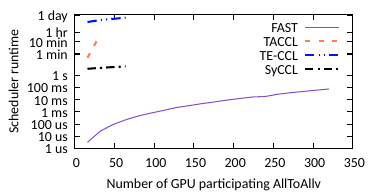}
\caption{Comparison of \sys’s scheduling runtime against state-of-the-art solver-based schedulers (log-scale).}
\label{fig:scheduling_overhead_cmp}
\end{figure}

Even at smaller scales, their scheduling time remains prohibitively long---ranging from seconds to hours---far exceeding the transfer time itself and much longer than the interval before the workload itself changes.  
In comparison, \sys’s lightweight scheduling enables {\it on-the-fly} planning.

\jpara{Memory overhead}  
\sys also requires additional memory for temporary buffers that hold rebalanced or redistributed data.  
Under random workloads, this overhead is about $\approx\mspace{-5mu}30\%$ of the original \atav buffer size.  
In practice, the impact is minimal: typical \atav buffers are under 1 GB (e.g., in DeepEP~\cite{deepep}), so the extra cost is less than 300 MB.  
On modern GPUs such as the NVIDIA H200~\cite{h200} with 141 GB of memory, this represents under 0.22\% of total capacity—an acceptable tradeoff for the performance gains.

\subsection{Scaling and Bandwidth Sensitivity}
\label{eva: simulation}

We use simulation to evaluate \sys beyond the limits of our physical testbeds, exploring both larger scales and different scale-up/scale-out bandwidth configurations.  
The simulator follows the analytical framework widely used in prior work such as TE-CCL and TACCL~\cite{taccl, teccl}: given a schedule with a sequence of transfer steps (each with a defined size), the completion time is computed by summing per-step costs.  
Each cost consists of a fixed link wake-up delay plus the transmission time ($\tfrac{\text{data size}}{\text{link bandwidth}}$).

We focus on scenarios that solver-based schedulers cannot scale to and therefore exclude them from comparison. Accordingly, we compare \sys against SpreadOut and an optimal bandwidth bound, which assumes infinitely fast scale-up links so that intra-server transfers are instantaneous. Under this bound, scale-out is the only bottleneck, and the optimal time is defined by the maximum balanced sender or receiver load divided by the scale-out bandwidth.

\jpara{Performance at larger scale}  
We first scale the number of GPUs in \atav, with each GPU pair transmitting 50 MB on average in a random workload, simulated in a 400 Gbps scale-out network and a 450 GBps scale-up network (H200).
As shown in~\cref{fig:scale_server_n}, \sys stays within 5\% of optimal when scheduling time is excluded (“FAST raw”).
Including scheduling time, the gap widens to 10\% at larger scales, since scheduling cost grows faster than workload completion time (which increases only linearly with GPU count).
We leave on-the-fly scheduling at extreme scale as future work.
By contrast, SpreadOut achieves only about half of \sys’s throughput.

\begin{figure}[t]
\centering
\begin{subfigure}[b]{0.45\linewidth}
\includegraphics[width=0.95\linewidth, trim=4 1 4 4, clip]{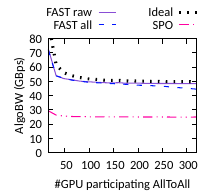}
\caption{Perf. at scale}
\label{fig:scale_server_n}
\end{subfigure}
\begin{subfigure}[b]{0.54\linewidth}
\includegraphics[width=0.95\linewidth, trim=4 1 4 4, clip]{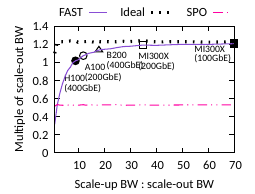}
\caption{Varying bandwidth ratios}
\label{fig:sim_speeds}
\end{subfigure}
\caption{Simulation of \sys at larger scales and varying scale-up/scale-out bandwidth ratios under random workloads.}
\label{fig:sim_scale}
\end{figure}

\jpara{Performance under varying scale-up/scale-out ratios}
We also evaluate \sys under different scale-up/scale-out bandwidth ratios on a 32-GPU setup.
As shown in~\cref{fig:sim_speeds}, normalized bandwidth is reported relative to scale-out capacity, which can exceed 1 since about 25\% of traffic is intra-server, giving an upper bound near 1.25.
Performance improves as the scale-up/scale-out ratio increases, as faster scale-up links further reduce balancing and redistribution overhead.
These results indicate that \sys sustains high efficiency on modern GPUs with advanced scale-up interconnects.
\section{Discussion}
\label{sec: discussion}
\jpara{Other collectives}
\sys is intentionally specialized for \atav, where skewness and dynamism fundamentally limit existing approaches, and is not designed as a general-purpose collective scheduler. For balanced collectives such as All-Reduce and All-Gather, communication patterns are static with uniform flow sizes, and near-optimal schedules can typically be achieved using existing library or solver-based implementations; a dynamic, traffic-aware scheduler provides little additional benefit. In practice, modern libraries (e.g., NCCL) already select among multiple collective algorithms at runtime, and \sys naturally fits this model as a specialized implementation for skewed \atav workloads.

\jpara{Other topologies}
\sys targets two-tier GPU cluster topologies that combine scale-up and scale-out fabrics, which are increasingly common in modern deployments~\cite{meta, junipter_network} with HGX-style platforms~\cite{hgx}. The core insight of \sys is to reshape traffic to reduce skew before it reaches slower network tiers, thereby simplifying the workload imposed on bottleneck fabric. While our current implementation focuses on two tiers, this principle naturally extends to multi-tier networks.

\jpara{Hybrid parallelism}
Our evaluation focuses on pure expert parallelism (EP). In hybrid deployments combining EP with tensor or pipeline parallelism, \sys can be applied directly when collectives execute in disjoint time windows. Coordinating \sys with concurrent collectives that share networks introduces additional complexity and is left for future work.

\jpara{Persistence of scale-up vs.\ scale-out bandwidth gaps}
We expect the bandwidth gap between scale-up and scale-out fabrics to persist, as scale-up interconnects benefit from short distances and tight hardware–software co-design, while scale-out networks must span longer distances and support backward compatibility and multi-tenancy. Exploiting scale-up bandwidth will therefore remain an important opportunity.

\jpara{Interaction with congestion control}
\sys operates at the collective layer and is orthogonal to transport-layer congestion control. Even under ideal congestion control, \sys can provide additional benefits by reducing traffic skew and improving NIC utilization, preventing communication from being bottlenecked by a small number of heavily loaded NICs.

\section{Related Work}
\label{sec: related}

\jpara{Collective communication schedulers}
Classic \ata algorithms such as SpreadOut~\cite{mpi_all2all} assume balanced workloads and single-tier networks. Modern GPU libraries (e.g., NCCL~\cite{nccl}, RCCL~\cite{rccl}) exploit two-tier fabrics but do not handle skew, stragglers, or incast. Solver-based schedulers such as SCCL, TACCL, TE-CCL, and SyCCL~\cite{sccl, taccl, teccl, syccl} can produce near-optimal schedules but incur prohibitive overheads, making them unsuitable for dynamic workloads. In contrast, \sys provides a scalable, on-the-fly scheduler for skewed and dynamic \atav in modern GPU clusters.

\jpara{MoE optimizations}
Prior work on mixture-of-experts training~\cite{lina, tutel, gshard, switchtransformer, deepspeed_moe, mixtral, deepseek} improves efficiency through  model design or communication–computation overlap, but typically treats \atav as a black box. \sys complements these efforts by directly optimizing \atav.

\jpara{Leveraging NIC idleness}
Recent work, FuseLink~\cite{fuselink}, exploits under-utilized NIC bandwidth but operates below the collective layer, preserving the logical communication structure. 
Although FuseLink shifts traffic onto idle NICs, it does not reassign flow pairings and thus cannot address bottlenecks from skew or incast.
\sys instead reshapes communication using overlay paths that stage transfers through intermediate GPUs, explicitly mitigating hotspots and contention.

\jpara{Switch designs}
Prior work applies Birkhoff’s decomposition to switch scheduling~\cite{birkhoff_switch1, birkhoff_switch2} and circuit-switched \ata communication~\cite{chronos}. \sys instead applies it at the GPU collective layer, running over commodity packet-switched networks without any switch changes.

\section{Conclusion}
\label{sec: conclusion}
\ata communication is critical to modern distributed systems. We present \sys, the first polynomial-time, on-the-fly scheduler for skewed and dynamic \atav. \sys absorbs skew using fast scale-up links and enforces balanced one-to-one transfers over scale-out, enabling efficient \atav communication. Across NVIDIA and AMD testbeds, \sys outperforms state-of-the-art systems while reducing synthesis time by orders of magnitude.

\section{Acknowledgments}
\label{sec: acknowlegment}

We thank our shepherd, Kai Chen, and the anonymous reviewers for their constructive feedback. We also thank Yonghao Zhuang, Zhihao Jia, Isabel Suizo, and Hugo Sadok for feedback on the drafts of this work.
This work was supported by ONR Award N000142412059 and a Sloan Research Fellowship.
Justine Sherry holds concurrent appointments as an Associate Professor at CMU and as an Amazon Scholar. This work was conducted at CMU and is not affiliated with Amazon. This work was also supported in part by ACE, one of the seven centers in JUMP 2.0, a Semiconductor Research Corporation (SRC) program sponsored by DARPA.

\bibliographystyle{acm}
\bibliography{reference}

\appendix
\section{Appendix}
\subsection{Performance Bound under Adversarial Workload}
\label{subsec: worst-case-proof}

We establish a theoretical bound on \sys's performance by analyzing its behavior under adversarial workloads that trigger its worst-case execution.

We first introduce the symbols and assumptions used for proofs.
There are $n$ servers and $m$ GPUs and $m$ NICs within each server, making a total of $m \times n$ GPUs participate in \ata.
We denote the per-GPU bandwidth of scale-up and scale-out network as $B_1$ and $B_2$.
The scale-up network topology is a switch, while the other topology's performance bound can be derived in a similar way.
The total transfer size between two different servers $i$ and $j$ is denoted as $T_{ij}$ while intra-server portion of \ata in server $i$ is denoted as $S_{i}$.
$T_{ij}$ ($i\neq j$) and $S_{i}$ constitutes the complete \ata transfer workload.
Note that $T_{ii}$ does not represent anything and is thus set to be zero for the ease of writing proofs.
We don't prove the situation where the intra-node transfer is the majority of the \ata workload because there are more scale-out pairs than scale-up pairs in multi-node \ata workload, making this scenario rare.
So, we assume each server's intra-node transfer size is no larger than the average of all the inter-node transfers, i.e.,  $S_i \leq \frac{1}{n}\sum_{j=0}^{n-1}(T_{ij})$.

\begin{theorem}
The optimal transfer completion time $t_{optimal}$ is:  
$$ \frac{1}{mB_2}\max( \max_{i=0}^{n-1}({\sum_{j=0}^{n-1} T_{ij})},\,  \max_{j=0}^{n-1}({\sum_{i=0}^{n-1} T_{ij})})$$.
\end{theorem}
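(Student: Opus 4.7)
The plan is to establish the claim by proving matching lower and upper bounds: no schedule can finish faster than the stated quantity, and a schedule achieving it exists (at least in the idealized model). The quantity has a clean interpretation: each server aggregates scale-out bandwidth $mB_2$ across its $m$ NICs, so the time it spends sending or receiving its inter-server traffic is at least (total bytes at that server) divided by $mB_2$, and the overall optimum is dictated by the worst such server.

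For the lower bound, I would argue via a cut capacity argument applied separately to each server. Fix any server $i$: across its $m$ NICs, the aggregate outgoing scale-out capacity is $mB_2$, so completing all inter-server sends from $i$ requires time at least $\frac{1}{mB_2}\sum_{j} T_{ij}$. Symmetrically, on the receive side, time at least $\frac{1}{mB_2}\sum_{i} T_{ij}$ is needed to absorb everything destined for server $j$. Taking the maximum over all $i$ and all $j$ yields the lower bound
\[
t_{optimal} \ \geq\ \frac{1}{mB_2}\max\!\Bigl(\max_{i}\sum_{j}T_{ij},\ \max_{j}\sum_{i}T_{ij}\Bigr).
\]
The assumption $S_i \leq \frac{1}{n}\sum_j T_{ij}$ together with the large $B_1/B_2$ ratio ensures the intra-server portion does not tighten this bound.

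For the upper bound, I would construct a schedule that meets the bound in the idealized model where scale-up transfers are instantaneous (or at least non-bottlenecking). First, use intra-server balancing, as in Section~\ref{subsec: phase1_design}, to reshape the GPU-level matrix into a form where each server's $m$ NICs carry equal per-destination scale-out load; this is possible because the heaviest row and column sums of the reshaped matrix coincide with the server-level totals $\sum_j T_{ij}$ and $\sum_i T_{ij}$. Then pad the resulting $n\times n$ server-level matrix into a scaled doubly stochastic form (as described just before the Birkhoff overview), apply Birkhoff's decomposition, and execute the permutation stages in order. Because Birkhoff keeps every bottleneck row and column continuously active, the total scale-out time equals the maximum row/column sum of the server-level matrix divided by $mB_2$, matching the lower bound exactly.

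The main obstacle I anticipate is not either bound in isolation but the modeling choice around $B_1$: under finite scale-up bandwidth, intra-server balancing is not truly free, so the achievability side strictly holds only in the limit $B_1\to\infty$ (or more carefully, whenever the balancing and redistribution volumes fit within the scale-out critical path and can be hidden by the pipeline of Section~\ref{subsec: end_to_to_pipeline}). I would handle this by framing the theorem as characterizing the information-theoretic optimum for the scale-out bottleneck, and deferring the finite-$B_1$ gap to the subsequent worst-case analysis that derives the $2.12\times$ factor. This framing keeps the proof clean: the lower bound is unconditional, and the upper bound is witnessed by the FAST construction under the standard idealization used throughout the paper's optimality discussion.
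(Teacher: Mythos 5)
Your proposal is correct and follows essentially the same route as the paper: idealize the scale-up fabric as free, lower-bound the scale-out time by each server's aggregate NIC capacity $mB_2$ on both the send and receive sides, and observe that balancing plus Birkhoff's decomposition achieves this bound in the idealized model. You are somewhat more explicit than the paper about the achievability direction and the finite-$B_1$ caveat, but the substance is identical.
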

\begin{proof}
Let us compute the transfer completion time in an ideal world where the bandwidth of an intra-node network is infinite.
The intra-node transfer $S_i$, load balancing, and data redistribution can therefore be instantly completed.
After load balancing, each GPU has $T_{ij}/m$ amount of data waiting to be transferred via inter-server links.
The inter-server transfers are bound by the largest senders or receivers among all the servers, which is $\max( \max_{i=0}^{n-1}({\sum_{j=0}^{n-1} \frac{T_{ij}}{m})},\,$
$\max_{j=0}^{n-1}({\sum_{i=0}^{n-1} \frac{T_{ij}}{m})})$.
So, the shortest transfer completion time is the value shown in the theorem.
Any real-world transfers must be slower or equal to this ideal transfer.
\end{proof}

\begin{theorem}
\sys's transfer worst-case completion time \( t_{FAST} \) under the adversarial workload is:
\begin{equation}
\small
\begin{aligned}
t_{FAST} &= \frac{1}{mB_2} \max \Bigg( \max_{i=0}^{n-1} \sum_{j=0}^{n-1} T_{ij}, \,
\max_{j=0}^{n-1} \sum_{i=0}^{n-1} T_{ij} \Bigg) \\
&+ \frac{m-1}{mB_1} \max_{i=0}^{n-1} \sum_{j=0}^{n-1} T_{ij} 
+ \frac{1}{nB_1} \max_{i=0}^{n-1} \sum_{j=0}^{n-1} T_{ij}  \\
&+ \frac{1}{mB_1} \max_{i,j=0}^{n-1} T_{ij}.
\end{aligned}
\end{equation}
\end{theorem}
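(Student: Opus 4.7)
The plan is to decompose $t_{FAST}$ into the durations of \sys's four constituent operations---intra-server balancing, the scale-out transfer, per-stage redistribution, and the intra-server portion of \ata---and to upper bound each independently by its worst-case cost on the relevant fabric. Summing these worst-case durations yields a (deliberately pessimistic) upper bound that discards whatever pipelining slack \S\ref{subsec: end_to_to_pipeline} would otherwise hide, and this sum must match the four-term expression in the theorem. So the proof reduces to identifying, for each term, an adversarial configuration that saturates the corresponding bound.

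First, the scale-out term follows from the preceding theorem together with the optimality argument of \S\ref{subsec: phase2_design}: after balancing, every cross-server tile is a scalar matrix, so the reduced server-level matrix has maximum row/column sum $\max(\max_i\sum_j T_{ij},\max_j\sum_i T_{ij})$ spread evenly across $m$ GPUs, and Birkhoff's decomposition keeps the bottleneck sender or receiver transmitting at line rate $B_2$ in every stage until it completes, giving exactly the first summand.

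For the balancing term, I would construct the adversarial instance in which, inside the server achieving $\max_i\sum_j T_{ij}$, a single GPU initially holds every outgoing byte bound for other servers. Equalizing row sums forces that GPU to push a fraction $\frac{m-1}{m}$ of $\sum_j T_{ij}$ over the scale-up switch to the other $m-1$ GPUs in parallel, each at bandwidth $B_1$, yielding the second summand. The redistribution term follows a symmetric argument at the receiver side: per Birkhoff stage, only a $1/n$ fraction of the heaviest server's received volume needs to be rerouted via scale-up to reach its true destination GPU, and in the adversarial case where all such traffic converges on one GPU per destination server, this fraction concentrates on a single scale-up link, producing the $\frac{1}{nB_1}$ factor. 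The intra-server \ata term is then obtained by bounding $S_i$: the paper's standing assumption $S_i\le\frac{1}{n}\sum_j T_{ij}$ makes the \ata's internal traffic absorbable within the switch in time $\frac{1}{mB_1}\max_{i,j}T_{ij}$ in the worst case, where $\max_{i,j}T_{ij}$ conservatively caps any single transfer a GPU must push through its scale-up port.

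The main obstacle I anticipate is justifying the $\frac{1}{n}$ factor in the redistribution term. Unlike balancing---which is a one-shot prelude---redistribution is interleaved with scale-out stages in the pipeline, so the cost one should \emph{charge} is the residual that cannot be hidden behind the subsequent scale-out stage. Making this rigorous requires combining the Birkhoff stage-count bound $(n^2-2n+2)$ from \S\ref{subsec: properties} with a per-stage volume argument, ensuring that even in the adversarial layout the per-stage redistribution always fits within one $\frac{1}{n}$-sized chunk of the heaviest row sum. Once that subtlety is settled, the remaining three bounds are routine adversarial calculations on the scale-up fabric, and adding them to the scale-out cost yields the claimed four-term inequality.
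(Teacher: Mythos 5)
Your high-level plan---sum the worst case of each of the four constituent operations---matches the paper's, and your first two terms (the scale-out cost inherited from the optimality theorem, and the balancing cost $\frac{m-1}{mB_1}\max_{i}\sum_{j}T_{ij}$ obtained by concentrating a server's outbound traffic on a single GPU) are argued essentially as the paper argues them. But you have mis-attributed the remaining two summands, and the mis-attribution conceals a real gap. In the paper, the term $\frac{1}{nB_1}\max_{i}\sum_{j}T_{ij}$ is \emph{not} a redistribution cost: it bounds the intra-server portion $S_i$ of the All-to-All, whose worst case is all of $S_i$ moving between just two GPUs (time $S_i/B_1$), combined with the standing assumption $S_i\le\frac{1}{n}\sum_{j}T_{ij}$. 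Likewise $\frac{1}{mB_1}\max_{i,j}T_{ij}$ does not bound the intra-server All-to-All; it is the redistribution cost of the \emph{final} Birkhoff stage only, whose per-GPU volume is at most $\max_{i,j}T_{ij}/m$. Your claim that the intra-server All-to-All is absorbed in time $\frac{1}{mB_1}\max_{i,j}T_{ij}$ has no justification and does not follow from the assumption on $S_i$.

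The deeper missing idea is why the intermediate redistributions contribute nothing. The paper sorts the Birkhoff stages in ascending order of per-GPU size $l_0\le l_1\le\cdots$ and observes that stage $i$'s worst-case redistribution, $\frac{(m-1)l_i}{B_1}$, is less than stage $i{+}1$'s scale-out time $\frac{l_{i+1}}{B_2}$ whenever $B_1>(m-1)B_2$ (true for $m=8$ on the target hardware); hence every redistribution except the last is fully hidden behind the next scale-out stage, the scale-out stages run consecutively at $t_{\mathrm{optimal}}$, and only the last stage's redistribution survives as a residual. Your proposed substitute---arguing via the $(n^2-2n+2)$ stage-count bound that each stage's redistribution ``fits within one $1/n$-sized chunk of the heaviest row sum''---does not work: a single Birkhoff stage can carry up to the entire row sum (for instance when the server-level matrix is already close to a permutation), so its redistribution volume is not bounded by a $1/n$ fraction. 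Moreover, your framing of ``discarding the pipelining slack'' and charging every operation at full worst-case cost would require summing the redistribution of \emph{all} stages, which depends on the total received volume and matches neither of the theorem's last two terms. You need the sorted-stages-plus-bandwidth-ratio hiding argument, or an equivalent, to recover the stated formula.
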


\begin{proof}
We compute \sys's performance under adversarial workload by summing the worst-case transfer time of each transfer step:
{\small\texttt{balance $\rightarrow$ intra-server portion of \ata  $\rightarrow$ Birkhoff's stages  $\rightarrow$ Final stage's redistribution}}.

For load balancing, it would take the longest time to complete the job when $T_{ij}$ is located at a single GPU in the beginning, as it causes the largest amount of data (i.e., $\frac{m-1}{m}\cdot T_{ij}$) to be balanced.
For a specific server pair, it takes the source GPU $\frac{m-1}{m}\cdot T_{ij}\cdot \frac{1}{B_1}$ amount of time to balance data.
Among all the server pairs, this balancing step takes $t_0 = \max_{i=0}^{n-1}(\sum_{j=0}^{n-1} T_{ij}) \frac{m-1}{mB_1}$.

For the intra-server portion of \ata, the worst case is that all the $S_i$ data is moved between only two GPUs, leaving the rest of the scale-up network idle. So, the worst-case time among all the servers is $
t_1 = \max_{i=0}^{n-1}\frac{S_i}{B_1} \leq  \frac{1}{n B_1} \max_{i=0}^{n-1}(\sum_{j=0}^{n-1} T_{ij})
$ by using the assumption  $S_i \leq \frac{1}{n}\sum_{j=0}^{n-1}(T_{ij})$.

For staged inter-server transfer, Birkhoff's Theorem can generate at most $n^2 -2n + 2$ transfer steps.
We first sort them in ascending order based on each step's transfer size and get the sorted size  $l_0 \leq l_1 \leq \cdots \leq l_{n^2 -2n + 1}$.
We then execute the transfer steps in the sorted order, which successfully hides the current step's data redistribution from the next step's inter-server transfer, since
(i) the redistribution time cost of stage $i$ is $\frac{(m-1) l_i}{B_1}$ because each GPU at destination server receives $l_i$ amount of data from scale-out, all of which needs to be forwarded to a single GPU (worst case scenario);
(ii) the scale-out transfer cost of stage $i+1$ is $\frac{l_{i+1}}{B_2}$; and
(iii) $\frac{(m-1) l_i}{B_1} < \frac{l_{i+1}}{B_2}$, because $l_i \leq l_{i+1}$ and $B_1$ (e.g., 450 GBps in H100) is more than $(m-1) = 7$ times faster than $B_2$ (e.g., 50 GBps) under today's $m=8$ cluster.
This means staged scale-out transfers from Birkhoff's theorem are consecutive, making the actual scale-out transfer time $t_2$ equals $t_{optimal}$ from the ideal setting because Birkhoff let the bottleneck servers keep transmitting.

Finally, for the last stage's redistribution, since each stage does a one-to-one server matching, the worst case is when the last stage's scale-out transfer picks the largest transfer size among all the server pairs, which is $\max_{i,j=0}^{n-1} T_{ij} / m$, making the worst-case completion time as
$
t_3 = \frac{1}{B_1}\max_{i,j=0}^{n-1} \frac{T_{ij}}{m}
$.

Therefore, \sys worst-case transfer time under adversarial workload is $t_{FAST} = t_0 + t_1 + t_2 + t_3$, which is the value shown in the theorem.

\end{proof}

With optimal performance and \sys's worst-case performance, we can calculate the performance bound.

\begin{theorem}
The gap between \sys's worst-cast performance and optimal performance is bound by
$
\frac{B_2}{B_1}(m+\frac{m}{n})
$.
\end{theorem}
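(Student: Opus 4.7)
The plan is to reduce the claim to a direct comparison between the closed-form expression for $t_{FAST}$ derived in the preceding theorem and the expression for $t_{optimal}$, interpreting the ``gap'' as the multiplicative overhead $(t_{FAST}-t_{optimal})/t_{optimal}$. First, I would introduce a single shorthand for the dominant quantity, $M := \max\bigl(\max_{i}\sum_{j} T_{ij},\, \max_{j}\sum_{i} T_{ij}\bigr)$, so that $t_{optimal} = M/(mB_2)$ and so that each of the three scale-up overhead terms in $t_{FAST}$ can be bounded in terms of $M$.

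Next, I would bound each of the three additive overheads individually. The balancing term is already $\frac{m-1}{mB_1}\max_i\sum_j T_{ij} \le \frac{m-1}{mB_1} M$. The intra-server portion of \ata contributes $\frac{1}{nB_1}\max_i\sum_j T_{ij} \le \frac{1}{nB_1} M$. The final-stage redistribution term uses the elementary observation that $\max_{i,j} T_{ij} \le \max_{i}\sum_j T_{ij} \le M$, giving $\frac{1}{mB_1}\max_{i,j} T_{ij} \le \frac{1}{mB_1} M$. Summing the three bounds and collecting the $M/B_1$ prefactor yields
\begin{equation*}
t_{FAST} - t_{optimal} \;\le\; \frac{M}{B_1}\Bigl(\tfrac{m-1}{m} + \tfrac{1}{m} + \tfrac{1}{n}\Bigr) \;=\; \frac{M}{B_1}\Bigl(1 + \tfrac{1}{n}\Bigr).
\end{equation*}

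The final step is to divide by $t_{optimal} = M/(mB_2)$; the factor of $M$ cancels cleanly, leaving $(t_{FAST}-t_{optimal})/t_{optimal} \le \frac{mB_2}{B_1}\bigl(1+\tfrac{1}{n}\bigr) = \frac{B_2}{B_1}\bigl(m+\tfrac{m}{n}\bigr)$, which is exactly the stated bound. I would close by noting the sanity checks implicit in the bound: it scales linearly with the bandwidth ratio $B_2/B_1$ (so faster scale-up shrinks the gap), it grows with GPUs-per-server $m$ (since both balancing and redistribution span more intra-server peers), and the $m/n$ term vanishes as the number of servers grows.

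The main obstacle is not any single inequality---each bound is immediate---but ensuring the accounting is tight and honest: in particular, one must be careful that the $\frac{m-1}{m}$ and $\frac{1}{m}$ fractions combine to $1$ rather than something weaker, and that $M$ is the right common upper bound for all three max-expressions (which requires the observation that any single entry $T_{ij}$ is dominated by the row sum containing it). Once that bookkeeping is set up, the proof is a one-line division.
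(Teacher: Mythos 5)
Your proposal is correct and follows essentially the same route as the paper: bound each of the three scale-up overhead terms by the common quantity $M=\max\bigl(\max_i\sum_j T_{ij},\,\max_j\sum_i T_{ij}\bigr)$ (using $\max_{i,j}T_{ij}\le\max_i\sum_j T_{ij}\le M$, which is exactly the paper's ``shrink the denominator'' step), then divide by $t_{optimal}=M/(mB_2)$ so that $M$ cancels. The arithmetic $\tfrac{m-1}{m}+\tfrac{1}{m}+\tfrac{1}{n}=1+\tfrac{1}{n}$ and the final factor $mB_2/B_1$ match the paper's bound $t_{FAST}/t_{optimal}\le 1+\tfrac{B_2}{B_1}\bigl(m+\tfrac{m}{n}\bigr)$ exactly.
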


\begin{proof}
We divide \sys worst-case transfer time by ideal transfer time as follows:
$
\frac{t_{FAST}}{t_{optimal}} = \frac{t_0 + t_1 + t_2 +t_3}{t_{optimal}} \leq 1 + \frac{B_2}{B_1}(m+\frac{m}{n})
$
where we shrink the denominator as follows: 
$$
\max( \max_{i=0}^{n-1}({\sum_{j=0}^{n-1} T_{ij})},\,  \max_{j=0}^{n-1}({\sum_{i=0}^{n-1} T_{ij})}) \geq \max_{i=0}^{n-1}({\sum_{j=0}^{n-1} T_{ij})} \geq \max_{i,j=0}^{n-1} T_{ij}
$$
to cancel out the numerator and get the final result.
\end{proof}

In conclusion, under adversarial workloads, the worst-case performance gap of \sys relative to optimal is bounded by the scale-up to scale-out bandwidth ratio.
With today’s hardware—for example, a 4-node cluster with 450 GBps scale-up on H100~\cite{h100} and 400 Gbps scale-out—this bound implies that \sys's worst-case scenario completes within 2.12× of the theoretical optimum.
In practice, this worst-case adversarial workload rarely happens and the performance is much closer to optimal as we show in the evaluation.

\end{document}